\newcommand{\edth} {\mbox{\symbol{'360}}}
\theoremstyle{plain}
\newtheorem*{proposition*}{Proposition}
\newtheorem*{lemma*}{Lemma}
\providecommand{\keywords}[1]
{\small	\textbf{\textit{Keywords:}} #1 }
\numberwithin{equation}{section}
\begin{document}
\bibliographystyle{unsrt}

\title{The `most classical' states of Euclidean invariant elementary quantum 
mechanical systems}
\author{L\'aszl\'o B. Szabados \\
Wigner Research Centre for Physics, \\
H-1525 Budapest 114, P. O. Box 49, EU\\
e-mail: lbszab@rmki.kfki.hu}

\maketitle

\begin{abstract}
Complex techniques of general relativity are used to determine \emph{all} 
the states in two and three dimensional momentum spaces in which the 
equality holds in uncertainty relations for non-commuting basic 
observables of Euclidean invariant elementary quantum mechanical systems, 
even with non-zero intrinsic spin. It is shown that while there is a
1-parameter family of such states for any two components of the angular 
momentum vector operator with any angle between them, such states exist for
a component of the linear and angular momenta \emph{only if} these 
components are orthogonal to each other and hence the problem is reduced to 
the two-dimensional Euclidean invariant case. We also show that the analogous 
states exist for a component of the linear momentum and of the centre-of-mass 
vector \emph{only if} the angle between them is zero or an acute angle. 
\emph{No} such state (represented by a square integrable and differentiable 
wave function) can exist for \emph{any} pair of components of the 
centre-of-mass vector operator. Therefore, the existence of such states 
depends not only on the Lie algebra, but on the choice of its generators 
as well. 
\end{abstract}

\keywords{smallest uncertainty states, Euclidean invariant quantum mechanical 
systems, edth operators}


\section{Introduction}
\label{sec-1}

The so-called canonical coherent states of the quantum mechanical system 
with the Heisenberg algebra as its algebra of basic observables are usually 
interpreted as the `most classical' states of the system. These are precisely 
the states that yield \emph{equality} in the uncertainty relation for the 
canonically conjugate observables. (For a review, see e.g. \cite{Kl,ZFG}). 
The states analogous to these, called the coherent states, have already been 
introduced for systems with more complicated algebra of basic observables, 
e.g. with the Lie algebras $su(2)$, $su(1,1)$ (see e.g. 
\cite{Rad}-\cite{WaSaPa}), with the Lie algebra $e(n)$ of the Euclidean group 
$E(n)$ in $n\geq2$ dimensions (see e.g. \cite{Bi}-\cite{GuMo-Ce}), or even 
with more the general ones (see \cite{Per}). 

In particular, in the case of the Euclidean groups, these states were 
constructed on the circle $S^1$ and on the 2-sphere $S^2$ 
\cite{KoRePa,KoRe1,KoRe02,GaGa,FreGaNo,GuMo-Ce}, and even on general 
$n$-spheres $S^n$ \cite{HaMi}, by constructing $n$ operators that are 
analogous to the annihilation operators of the Heisenberg system. In these 
investigations, the basic observables were the \emph{position} variable and 
the \emph{orbital} angular momentum, but the \emph{spin} part of the total 
angular momentum was \emph{a priori} assumed to be vanishing. 

In our previous paper \cite{Sz1} we determined \emph{all} the states of 
$SU(2)$-invariant quantum mechanical systems, in which the uncertainties of 
two components of angular momentum operators in two \emph{arbitrarily 
chosen} directions, ${\bf J}(\alpha)$ and ${\bf J}(\beta)$, yield equality 
in the uncertainty relation. It turned out that allowing the angle between 
the two components of the angular momentum vector operator to be arbitrary 
\emph{a new quantum mechanical phenomenon emerges}: the expectation values 
of the two components of the angular momentum behave in a symmetric way 
\emph{only if} the classical parameter space of the solutions is extended 
to be a larger space that is homeomorphic to the non-trivial Riemann surface 
known in connection with the complex function $\sqrt{z}$. 

In the present paper, we extend these investigations to $E(3)$-invariant 
(and, as an illustration of the general strategy, the much simpler 
$E(2)$-invariant) \emph{elementary} quantum mechanical systems. (Following 
the work of Newton and Wigner \cite{NeWign}, we call the system 
\emph{elementary} if 
its states belong to the carrier space of some unitary, \emph{irreducible} 
representation of its symmetry group. Here by $E(3)$ we mean the 
\emph{quantum mechanical} Euclidean group, i.e. the semidirect product of 
$SU(2)$ and the translation group $\mathbb{R}^3$, rather than the isometry 
group of the Euclidean 3-space \cite{StWi}.) No time evolution equation for 
the states is used; it is only based on the general \emph{kinematical} 
structure of the theory. Since the components of the linear momentum, as the 
generators of space translations, are already among the elements of 
Lie algebras $e(2)$ and $e(3)$, moreover we think that the momentum space is 
a more natural arena for the formulation of quantum theory than the 
configuration space, we search for these states in the \emph{momentum}, 
rather than in the position representation. (Since in these special cases 
both the momentum and configuration spaces are modeled by some Euclidean 
space, the momentum and position representations are usually considered to 
be physically equivalent. However, \emph{physically}, and hence also 
mathematically, these two spaces are \emph{different}: The former has 
an extra structure as it is a \emph{vector space} with the \emph{physically 
distinguished point} $p^i=0$, the latter is only an \emph{affine space}, in 
which the origin is \emph{not} distinguished and any two of its points are 
equivalent.) No \emph{a priori} restriction is imposed on the unitary, 
irreducible representations, thus, in particular, the intrinsic spin is 
\emph{not} assumed to be zero. 

If ${\bf A}$ and ${\bf B}$ are any two not commuting observables and $\phi$ 
is a normalized state of the system, then, as is well known, the necessary 
and sufficient condition of the equality in the uncertainty relation, 
$\Delta_\phi{\bf A}\Delta_\phi{\bf B}\geq\vert\langle[{\bf A},{\bf B}]\rangle
_\phi\vert/2$, is that $\phi$ is the solution of the eigenvalue equation 
\begin{equation}
({\bf A}-{\rm i}\lambda{\bf B})\phi=(\langle{\bf A}\rangle_\phi-{\rm i}
\lambda\langle{\bf B}\rangle_\phi)\phi  \label{eq:1.1}
\end{equation}
for some non-zero real number $\lambda$. Here e.g. $\langle{\bf A}\rangle
_\phi$ is the expectation value of ${\bf A}$, and the standard deviation in 
the state $\phi$ is given by $\Delta_\phi{\bf A}=\sqrt{\langle{\bf A}^2
\rangle_\phi-(\langle{\bf A}\rangle_\phi)^2}$. It follows from (\ref{eq:1.1}) 
that $\vert\lambda\vert=\Delta_\phi{\bf A}/\Delta_\phi{\bf B}$, and, in what 
follows, we choose $\lambda$ to be positive. We will call these states the 
`most classical' states (with respect to the observables ${\bf A}$ and 
${\bf B}$). (\ref{eq:1.1}) shows why it is almost impossible to find a 
non-trivial state in which the equality in the uncertainty relations would 
hold for all the pairs $({\bf A},{\bf B})$, $({\bf B},{\bf C})$ and $({\bf 
C},{\bf A})$ from the \emph{three} non-commuting observables ${\bf A}$, 
${\bf B}$ and ${\bf C}$: the state $\phi$ would have to be the eigenvector 
of the three non-self-adjoint operators ${\bf A}-{\rm i}\lambda_1{\bf B}$, 
${\bf B}-{\rm i}\lambda_2{\bf C}$ and ${\bf C}-{\rm i}\lambda_3{\bf A}$ at 
the same time for some non-zero reals $\lambda_1$, $\lambda_2$ and $\lambda
_3$. It is equation (\ref{eq:1.1}) that we will solve in the carrier space 
of \emph{unitary, irreducible} representations of $E(2)$ and $E(3)$ when 
the basic observables belong to the Lie algebra $e(2)$ and $e(3)$, 
respectively. 

We find that the existence and the properties of the most classical 
states of elementary systems depend not only on the algebra of 
observables, but also on the actual choice of the pair of observables. 
We show that, in contrast to the $E(2)$-invariant systems, in 
$E(3)$-invariant systems the existence of the most classical states is 
\emph{not} guaranteed for any two non-commuting observables; and even if 
these exist, then the expectation values of the observables could be 
considerably restricted relative to their classical values. 

In particular, 
(1) for the component ${\bf p}(\alpha)$ and ${\bf J}(\beta)$ of the linear 
and total angular momenta in the direction $\alpha^i$ and $\beta^i$, 
respectively, the most classical states exist precisely when the two 
directions are orthogonal to each other, and the expectation value of the 
former is zero and that of the latter is an integer/half-odd-integer times 
of $\hbar$ for systems with integer/half-odd-integer spin; 
(2) for the components ${\bf J}(\alpha)$ and ${\bf J}(\beta)$ of the angular 
momentum the most classical states always exist (see the third paragraph 
above, and \cite{Sz1} for the details); 
(3) for the component ${\bf p}(\alpha)$ and ${\bf C}(\beta)$ of the linear 
momentum and centre-of-mass, respectively, these states exist precisely when 
the angle between the directions $\alpha^i$ and $\beta^i$ is zero or an acute 
angle, and the range of the expectation value of ${\bf p}(\alpha)$ is 
restricted by this angle; 
(4) for the components ${\bf C}(\alpha)$ and ${\bf C}(\beta)$ of the 
centre-of-mass \emph{no such state exists at all} which could be represented 
by any differentiable, square integrable wave function. 
In the cases when these states exist, they form infinite parameter families. 
These are explicitly given in closed form, and they are considerably more 
general than the previously known ones. 

In section \ref{sec-2} we determine the most classical states of the 
$E(2)$-invariant systems and discuss their properties. Then, section 
\ref{sec-3} is devoted to the $E(3)$-invariant systems: first we summarize 
the unitary, irreducible representations of $e(3)$, and then we determine 
the most classical states in the cases mentioned above. The results are 
summarized and discussed in section \ref{sec-4}. In deriving the above 
results, complex techniques of general relativity are used. To make the 
paper (mostly) self-contained, these ideas are summarized in the Appendix. 
For a more detailed discussion of these ideas and techniques, see e.g. 
\cite{PR,HT,EaTod,NP,Goetal,FrSz}. 

Our conventions are mostly those of \cite{PR} (and identical with those of 
\cite{Sz1}). In particular, round brackets around indices denote 
symmetrization. No abstract indices are used, every index is a concrete 
(name, or component) index. 


\section{$E(2)$-invariant elementary systems}
\label{sec-2}

The (abstract) Lie algebra $e(2)$ is spanned by the elements ${\bf p}^1$, 
${\bf p}^2$ and ${\bf J}$ satisfying the commutation relations $[{\bf p}^1,
{\bf p}^2]=0$, $[{\bf p}^1,{\bf J}]={\rm i}\hbar{\bf p}^2$ and $[{\bf p}^2,
{\bf J}]=-{\rm i}\hbar{\bf p}^1$. Thus ${\bf P}^2:=({\bf p}^1)^2+({\bf p}^2)
^2$ is a Casimir operator in its enveloping algebra; and hence an 
\emph{irreducible} representation of $e(2)$ is labeled by the value $P
\geq0$ of ${\bf P}$. The representation space is ${\cal H}:=L_2({\cal S},
P{\rm d}\varphi)$, where ${\cal S}:=\{(p^1,p^2)\in\mathbb{R}^2\,\vert\,(p^1)
^2+(p^2)^2=P^2\,\}$, the circle of radius $P$ in the classical momentum space 
$\mathbb{R}^2$, is analogous to the `mass shell' (or rather the `kinetic 
energy shell'), endowed with the Euclidean scalar product $\delta_{ij}$, $i,j
=1,2$. We lower and raise the small Latin indices freely by the Kronecker 
delta $\delta_{ij}$ and its inverse, respectively. Here, $(p^1,p^2)$ are 
Cartesian coordinates in the momentum space $\mathbb{R}^2$. In the measure 
$P{\rm d}\varphi$ the angle $\varphi\in[0,2\pi)$ is defined by $(p^1,p^2)=
(P\cos\varphi,P\sin\varphi)$. 

An element of the group $E(2)$ is represented by a translation $\xi^i\in
\mathbb{R}^2$ and the angle $\chi\in[0,2\pi)$ of a rotation; and the action 
of such an element $(\xi^i,\chi)$ of $E(2)$ on the function $\phi\in{\cal H}$ 
is $({\bf U}(\xi^i,\chi)\phi)(\varphi):=\exp({\rm i}p_i\xi^i/\hbar)\phi(
\varphi-\chi)$. Clearly, this ${\bf U}$ provides a unitary, irreducible 
representation of $E(2)$, and yields the representation of the abstract Lie 
algebra $e(2)$ as ${\bf p}_i\phi=p_i\phi$ and ${\bf J}\phi={\rm i}\hbar(d\phi/
d\varphi)$ as well. ${\bf p}_i$ are bounded Hermitean operators, and ${\bf J}$ 
is a formally self-adjoint (unbounded) operator on the dense subspace of the 
smooth functions in ${\cal H}$. 

Applying (\ref{eq:1.1}) to the present operators, for the (normalized) state 
$\phi$ we obtain $({\bf p}_1-{\rm i}\lambda_1{\bf J})\phi=(\langle{\bf p}_1
\rangle_\phi-{\rm i}\lambda_1\langle{\bf J}\rangle_\phi)\phi$ and $({\bf p}_2
-{\rm i}\lambda_2{\bf J})\phi=(\langle{\bf p}_2\rangle_\phi-{\rm i}\lambda_2
\langle{\bf J}\rangle_\phi)\phi$. However, in the above representation of the 
operators, these equations yield $P(\cos\varphi/\lambda_1-\sin\varphi/\lambda
_2)\phi=(\langle{\bf p}_1\rangle_\phi/\lambda_1-\langle{\bf p}_2\rangle_\phi/
\lambda_2)\phi$, which could hold true for every $\varphi$ only if $\phi=0$. 
Thus, there is \emph{no} state in which the equality would hold in \emph{both} 
inequalities $\Delta_\phi{\bf p}_1\Delta_\phi{\bf J}\geq\hbar\vert\langle
{\bf p}_2\rangle_\phi\vert/2$ and $\Delta_\phi{\bf p}_2\Delta_\phi{\bf J}
\geq\hbar\vert\langle{\bf p}_1\rangle_\phi\vert/2$. Only one such equation 
can be required to hold (as we already noted following equation 
(\ref{eq:1.1})). 

Thus, let $\alpha_i\in\mathbb{R}^2$ and form the operator ${\bf p}(\alpha):=
\alpha^i{\bf p}_i$. Although the parameters $\alpha_i$ could be arbitrary, it 
is more natural to consider only those of the form $(\alpha_1,\alpha_2)=(\cos
\alpha,\sin\alpha)$, where $\alpha\in[0,2\pi)$, because in this case the 
operator ${\bf p}(\alpha)$ is e.g. ${\bf p}_1$ in the appropriately rotated 
basis. Then $[{\bf p}(\alpha),{\bf J}]=-{\rm i}\hbar\,{\bf p}^\bot(\alpha)$, 
where ${\bf p}^\bot(\alpha):=\alpha_2{\bf p}_1-\alpha_1{\bf p}_2$. Hence 
$\Delta_\phi{\bf p}(\alpha)\Delta_\phi{\bf J}\geq\hbar\vert\langle
{\bf p}^\bot(\alpha)\rangle_\phi\vert/2$, in which, according to (\ref{eq:1.1}), 
the equality holds precisely when $({\bf p}(\alpha)-{\rm i}\lambda{\bf J})
\phi=(\langle{\bf p}(\alpha)\rangle_\phi-{\rm i}\lambda\langle{\bf J}\rangle
_\phi)\phi$. In the above explicit momentum representation, this condition 
takes the form of the differential equation 
\begin{equation}
\lambda\hbar\frac{d\phi}{d\varphi}=\Bigl(\langle{\bf p}(\alpha)\rangle_\phi-
P\cos(\varphi-\alpha)-{\rm i}\lambda\langle{\bf J}\rangle_\phi\Bigr)\phi. 
\label{eq:2.1}
\end{equation}
Its solution is 
\begin{equation*}
\phi=A\exp\Bigl(-\frac{P}{\lambda\hbar}\sin(\varphi-\alpha)+\frac{1}{\lambda
\hbar}\bigl(\langle{\bf p}(\alpha)\rangle_\phi-{\rm i}\lambda\langle{\bf J}
\rangle_\phi\bigr)\varphi\Bigr),
\end{equation*}
where $A$ is a constant. However, this function is periodic in $\varphi$ 
with period $2\pi$ precisely when $\langle{\bf p}(\alpha)\rangle_\phi=0$ and 
$\langle{\bf J}\rangle_\phi=\ell\hbar$, $\ell\in\mathbb{Z}$, and hence 
\begin{equation}
\phi=A\exp\Bigl(-\frac{P}{\lambda\hbar}\sin(\varphi-\alpha)-{\rm i}\ell
\varphi\Bigr). \label{eq:2.2}
\end{equation}
The constant $A$ is fixed by the normalization condition 
\begin{equation}
1=\int^{2\pi}_0\vert\phi\vert^2P\,{\rm d}\varphi=2\pi P\vert A\vert^2 I_0(
-\frac{2P}{\lambda\hbar}), \label{eq:2.3}
\end{equation}
where $I_k$ is the $k$th modified Bessel function of the first kind, $k=
0,1,2,...$; which can be given by 
\begin{equation}
I_k(x)=\frac{1}{2\pi}\int_0^{2\pi}\exp(x\cos(\omega))\cos(k\omega){\rm d}
\omega=\sum_{l=0}^\infty\frac{1}{l!(l+k)!}(\frac{x}{2})^{2l+k}. \label{eq:2.4}
\end{equation}
In particular, $I_k(x)$ is an even function if $k$ is even, and it is odd if 
$k$ is odd. They satisfy the recurrence relations $x(I_k(x)-I_{k+2}(x))=
2(k+1)I_{k+1}(x)$. The asymptotic expansion of $I_k(x)$ for large $x$ is 
$I_k(x)\sim(2\pi x)^{-1/2}\exp(x)(1-(4k^2-1)/(8x)+...)$ (see e.g. \cite{AbSt}, 
pp 375-377). Note also that $I_k=T_k(d/dx)I_0$, where $T_k$ is the 
Chebyshev polynomial of the first kind \cite{We}. In particular, $I_1=
(dI_0/dx)$ and $I_2=2(d^2I_0/dx^2)-I_0$ hold. We will use these formulae 
below, and in the next section, too. 

Then it is straightforward to calculate the expectation values and 
uncertainties: as we should, we do, in fact, recover $\langle{\bf p}(\alpha)
\rangle_\phi=0$ and $\langle{\bf J}\rangle_\phi=\ell\hbar$ from (\ref{eq:2.2}), 
too. In addition, by integration by parts, we obtain that 
\begin{equation*}
\langle\bigl({\bf p}(\alpha)\bigr)^2\rangle_\phi=\lambda\frac{\hbar}{2}\langle
{\bf p}^\bot(\alpha)\rangle_\phi, 
\hskip 20pt
\langle{\bf J}^2\rangle_\phi=\langle{\bf J}\rangle_\phi^2+\frac{1}{\lambda^2}
\langle\bigl({\bf p}(\alpha)\bigr)^2\rangle_\phi; 
\end{equation*}
and hence 
\begin{equation}
\bigl(\Delta_\phi{\bf p}(\alpha)\bigr)^2=\lambda\frac{\hbar}{2}\langle{\bf p}
^\bot(\alpha)\rangle_\phi, 
\hskip 20pt
\bigl(\Delta_\phi{\bf J}\bigr)^2=\frac{1}{\lambda}\frac{\hbar}{2}\langle
{\bf p}^\bot(\alpha)\rangle_\phi. \label{eq:2.5}
\end{equation}
Thus, $\Delta_\phi{\bf p}(\alpha)$ and $\Delta_\phi{\bf J}$ do, indeed, 
saturate the inequality $\Delta_\phi{\bf p}(\alpha)\Delta_\phi{\bf J}\geq
\hbar\vert\langle{\bf p}^\bot(\alpha)\rangle_\phi\vert/2$, and we 
recover $\lambda=\Delta_\phi{\bf p}(\alpha)/\Delta_\phi{\bf J}$, too. Finally, 
to determine $\langle{\bf p}^\bot(\alpha)\rangle_\phi$, we need to calculate 
$\langle{\bf p}_1\rangle_\phi$ and $\langle{\bf p}_2\rangle_\phi$ explicitly. 
Using 
\begin{equation}
\int_0^{2\pi}\exp(x\cos(\omega))\sin(k\omega){\rm d}\omega=0, \hskip 20pt 
k\in\mathbb{Z}, \label{eq:2.6}
\end{equation}
by integration by parts we find that 
\begin{equation}
\langle{\bf p}_1\rangle_\phi=-\sin\alpha\,2\pi\vert A\vert^2P^2\,I_1\bigl(
-\frac{2P}{\lambda\hbar}\bigr)=\sin\alpha\,2\pi\vert A\vert^2P^2\,I_1\bigl(
\frac{2P}{\lambda\hbar}\bigr).  \label{eq:2.7}
\end{equation}
By $0=\langle{\bf p}(\alpha)\rangle_\phi=\cos\alpha\langle{\bf p}_1\rangle
_\phi+\sin\alpha\langle{\bf p}_2\rangle_\phi$, $\langle{\bf p}_2\rangle_\phi$ 
can be determined, too. (Similar calculations yield that $\langle({\bf p}_1)^2
\rangle_\phi=\langle({\bf p}_2)^2\rangle_\phi=P^2/2$, and hence that $\langle
\delta^{ij}{\bf p}_i{\bf p}_j\rangle_\phi=P^2$, as we expected.) Hence, 
\begin{equation*}
\bigl(\Delta_\phi{\bf p}(\alpha)\bigr)^2=\pi\hbar\lambda P^2\vert A\vert^2I_1
\bigl(\frac{2P}{\lambda\hbar}\bigr).
\end{equation*}
Thus, using the normalization condition (\ref{eq:2.3}), finally we obtain 
that 
\begin{equation}
\Delta_\phi{\bf p}(\alpha)\Delta_\phi{\bf J}=\frac{1}{2}P\hbar\frac{I_1\bigl(
\frac{2P}{\lambda\hbar}\bigr)}{I_0\bigl(\frac{2P}{\lambda\hbar}\bigr)}. 
\label{eq:2.8}
\end{equation}
Since $I_0(0)=1$ and $I_1(0)=0$, in the $\lambda\to\infty$ limit the product 
uncertainty $\Delta_\phi{\bf p}(\alpha)\Delta_\phi{\bf J}$ tends to zero. On 
the other hand, since $\lim_{x\to\infty}(I_1(x)/I_0(x))=1$, in the $\lambda\to
0$ limit it tends to a finite value: $\Delta_\phi{\bf p}(\alpha)\Delta_\phi
{\bf J}\to P\hbar/2$. 

Therefore, to summarize, solution (\ref{eq:2.2}) is parameterized by the 
integer $\ell\in\mathbb{Z}$ and the continuous parameter $\lambda\in(0,
\infty)$, the latter being the ratio of the two uncertainties. This ratio is 
unrestricted, although the expectation values themselves are discrete: 
$\langle{\bf p}(\alpha)\rangle_\phi=0$ and $\langle{\bf J}\rangle_\phi=\ell
\hbar$. Although there is no state in the family of states (\ref{eq:2.2}) in 
which the product uncertainty would actually take its minimum, by 
(\ref{eq:2.8}) in the limit $\lambda\to\infty$ we can approximate such an 
ideal state as much as we wish. (Another candidate to be a measure 
of the `overall uncertainty' could be the sum of the square of the 
uncertainties, $\hbar^2(\Delta_\phi{\bf p}(\alpha))^2+P^2(\Delta_\phi{\bf J})
^2$, which gives $(\hbar P/2)(\hbar\lambda^2+P^2/\lambda)(I_1/I_0)$. 
This would yield a \emph{different} notion of the `most classical' states.) 
The role of the parameter $\alpha\in[0,2\pi)$ is only to specify which 
component of the linear momentum is considered, and this can be chosen 
freely. The product uncertainty does not depend on $\alpha$, as it should 
not.


\section{$E(3)$-invariant elementary systems}
\label{sec-3}

\subsection{The  unitary, irreducible representations of $e(3)$}
\label{sub-3.1}

The Lie algebra $e(3)$ is generated by ${\bf p}^i$ and ${\bf J}_i$, $i=1,2,3$, 
satisfying $[{\bf p}^i,{\bf p}^j]=0$, $[{\bf p}_i,{\bf J}_j]={\rm i}\hbar
\varepsilon_{ijk}{\bf p}^k$ and $[{\bf J}_i,{\bf J}_j]={\rm i}\hbar\varepsilon
_{ijk}{\bf J}^k$, where the lowering and raising of the small Latin indices 
are defined by $\delta_{ij}$ and its inverse, and $\varepsilon_{ijk}$ is the 
alternating Levi-Civita symbol. The two Casimir operators are ${\bf P}^2:=
\delta_{ij}{\bf p}^i{\bf p}^j$ and ${\bf W}:={\bf J}_i{\bf p}^i={\bf p}^i
{\bf J}_i$. Thus, in an \emph{irreducible} representation of $e(3)$, the 
Casimir operators are of the form $P^2{\bf I}$ and $w{\bf I}$, respectively, 
with some non-negative $P^2$ and real $w$, where ${\bf I}$ is the identity 
operator. 

The part of ${\bf J}_i$ that ${\bf W}$ does not fix is ${\bf C}_i:=-
\varepsilon_{ikl}{\bf J}^k{\bf p}^l+{\rm i}\hbar{\bf p}_i$. (Since in itself 
$-\varepsilon_{ikl}{\bf J}^k{\bf p}^l$ is \emph{not} self-adjoint even if 
${\bf p}^i$ and ${\bf J}_i$ are, the term ${\rm i}\hbar{\bf p}_i$ is needed 
to ensure the self-adjointness of ${\bf C}_i$.) Then 
\begin{equation}
[{\bf p}_i,{\bf C}_k]=-{\rm i}\hbar\bigl(\delta_{ik}{\bf P}^2-{\bf p}_i
{\bf p}_k\bigr), \hskip 20pt
[{\bf C}_i,{\bf C}_k]=-{\rm i}\hbar\Bigl(\varepsilon_{ikl}{\bf p}^l{\bf W}
+{\bf C}_i{\bf p}_k-{\bf C}_k{\bf p}_i\Bigr). \label{eq:3.1.1}
\end{equation}
In an irreducible representation, the part ${\bf C}_i$ of ${\bf J}_i$ can be 
interpreted as $P^2$-times the centre-of-mass vector operator. ${\bf W}$ 
represents the intrinsic spin, or rather the helicity, with respect to the 
3-momentum, while ${\bf L}^i:=\varepsilon^{ijk}{\bf C}_j{\bf p}_k/P^2={\bf J}
^i-w{\bf p}^i/P^2$ is the orbital angular momentum. ${\bf J}_i{\bf J}^i$ is 
a Casimir operator only of the $su(2)$ subalgebra, and $P^2{\bf J}_i{\bf J}
^i=(w^2-\hbar^2P^2){\bf I}+{\bf C}_i{\bf C}^i$ holds. 

The unitary, irreducible representations of $E(3)$ and $e(3)$ in their 
traditional, purely algebraic bra-ket formalism are known and are summarized 
e.g. in \cite{Tung,KoRe1,SiJa}. Nevertheless, they can also be derived and 
presented in a geometric form using Weyl spinor fields on 2-spheres of the 
momentum space, analogously to those on the mass-shell for the Poincar\'e 
group (see \cite{StWi}). However, a special feature of the spinor fields 
in the latter formalism, viz. their N-type character (see Appendix 
\ref{sub-A.2}, or for a more detailed discussion, \cite{PR,HT}), is not 
manifest. Moreover, and more importantly, the naive, apparently `obvious' 
decomposition of the total angular momentum operator into its spin and 
orbital parts is \emph{not} well defined. This drawback is cured by the use 
of the space of the square integrable cross section of the complex line 
bundles ${\cal O}(-2s)$ as the representation space. These two forms of 
the representations are motivated by the spinorial ideas and techniques of 
general relativity \cite{EaTod,PR,HT}; and, in the case of 
$SU(2)$-invariant systems considered in \cite{Sz1}, their use has already 
made considerably easier to solve the resulting partial differential 
equations in closed form. In Appendix \ref{sub-A.1}, we summarize the complex 
and spinorial tools that we use, and Appendix \ref{sub-A.2} is a summary of 
the key results on the unitary, irreducible representations of $E(3)$ in 
these two forms. These appendices make the present paper essentially 
self-contained. For a more detailed discussion of these ideas, see 
\cite{EaTod,PR,HT}, or, for a concise summary of the necessary background, 
see Appendix A.1 of \cite{Sz1}. 

Thus, the concrete realization of the unitary, irreducible representations 
of $e(3)$ that we use here is based on the complex line bundle ${\cal O}
(-2s)$ over the 2-sphere ${\cal S}$ of radius $P$, where $2s\in
\mathbb{Z}$ is fixed. This 2-sphere, ${\cal S}:=\{p^i\in\mathbb{R}^3\,\vert\,
p^ip^j\delta_{ij}=P^2\,\}$, is thought of as the `kinetic energy shell' in the 
classical momentum 3-space $(\mathbb{R}^3,\delta_{ij})$, in which $p^i$ are 
Cartesian coordinates. The carrier space of the representation is the Hilbert 
space ${\cal H}_s=L_2({\cal S},{\rm d}{\cal S})$ of the square-integrable 
cross sections of the line bundle, where the measure ${\rm d}{\cal S}$ is the 
natural metric area element on ${\cal S}$. The integral of the first Chern 
class of ${\cal O}(-2s)$ is $2s$, which is a topological invariant, and hence 
$s$ characterizes the global non-triviality (or `twist') of the bundle (see 
e.g. \cite{FrSz}). 

In this representation, the $e(3)$ Casimir operators have the form ${\bf P}
^2\phi=P^2\phi$ and ${\bf W}\phi=\hbar Ps\phi$ (see Appendix \ref{sub-A.2}), 
and the basic observables and the centre-of-mass operator act on 
$\phi$ as 
\begin{eqnarray}
&{}&{\bf p}^i\phi=p^i\phi, \label{eq:3.1.2a}\\
&{}&{\bf J}^i\phi=s\hbar\frac{p^i}{P}\phi+{\bf L}^i\phi=s\hbar\frac{p^i}{P}
  \phi+P\hbar\bigl(m^i\,{\edth}'\phi-\bar m^i\,{\edth}\phi\bigr), 
  \label{eq:3.1.2b}\\
&{}&{\bf C}_i\phi={\rm i}\hbar\bigl(P^2m_i\,{\edth}'\phi+P^2\bar m_i\,{\edth}
  \phi-p_i\phi\bigr). \label{eq:3.1.2c}
\end{eqnarray}
In particular, the spin of the system is encoded into the twist $s$ of the 
line bundle ${\cal O}(-2s)$. For the definition of the complex null tangents 
$m^i$ and $\bar m^i$ of ${\cal S}$, the operators ${\edth}$ and ${\edth}'$ 
and the related concepts, and the derivation of 
(\ref{eq:3.1.2a})-(\ref{eq:3.1.2c}), see the appendices 
\ref{sub-A.1}-\ref{sub-A.2}. In this representation, the $su(2)$ Casimir 
operator is ${\bf J}_i{\bf J}^i\phi=s^2\hbar^2\phi-P^2\hbar^2({\edth}{\edth}'
+{\edth}'{\edth})\phi$. The square of the centre-of-mass vector operator 
deviates from this only in a term proportional with the identity operator: 
${\bf C}_i{\bf C}^i\phi=P^2{\bf J}_i{\bf J}^i\phi+P^2\hbar^2(1-s^2)\phi{\bf I}$. 

\subsection{The most classical states for the $({\bf p}^i,{\bf J}_i)$ 
system}
\label{sub-3.2}

If ${\bf p}^i$ and ${\bf J}_i$ are considered to be the basic observables in 
$e(3)$, then there are two kinds of non-commuting observables: one linear and 
one angular momentum component, and two different angular momentum components. 
We discuss these cases separately. 


\subsubsection{The most classical states for the observables $({\bf p}
(\alpha),{\bf J}(\beta))$}
\label{sub-3.2.1}

For any $\alpha_i,\beta^i\in\mathbb{R}^3$ satisfying $\alpha_i\alpha_j\delta
^{ij}=\beta^i\beta^j\delta_{ij}=1$, we form the operators ${\bf p}(\alpha):=
\alpha_i{\bf p}^i$ and ${\bf J}(\beta):=\beta^i{\bf J}_i$, i.e. the components 
of the basic observables determined by the directions  $\alpha^i$ and $\beta
^i$. However, without loss of generality, we may assume that e.g. $\beta^1=
\beta^2=0$ and $\beta^3=1$, because by an appropriate rotation of the 
Cartesian coordinate system this can always be achieved. Using the 
commutators of the basic observables, in any normalized state $\phi$ we 
obtain $[{\bf p}(\alpha),{\bf J}(\beta)]={\rm i}\hbar\,\alpha^i\beta^j
\varepsilon_{ijk}{\bf p}^k$. Hence, we assume that $\alpha_3\not=\pm1$, 
because otherwise $\alpha^i=\pm\beta^i$ would be allowed, and for these 
${\bf p}(\alpha)$ and ${\bf J}(\beta)$ would commute. Then 
\begin{equation}
\Delta_\phi{\bf p}(\alpha)\Delta_\phi{\bf J}(\beta)\geq\frac{1}{2}\vert\langle
[{\bf p}(\alpha),{\bf J}(\beta)]\rangle_\phi\vert=\frac{\hbar}{2}\vert\alpha^i
\beta^j\varepsilon_{ijk}\langle{\bf p}^k\rangle_\phi\vert, \label{eq:3.2.1}
\end{equation}
in which, by (\ref{eq:1.1}), the condition of the equality is 
\begin{equation}
\bigl({\bf p}(\alpha)-{\rm i}\lambda{\bf J}(\beta)\bigr)\phi=\Bigl(\langle
{\bf p}(\alpha)\rangle_\phi-{\rm i}\lambda\langle{\bf J}(\beta)\rangle_\phi
\Bigr)\phi \label{eq:3.2.2}
\end{equation}
for some \emph{positive} $\lambda$. By (\ref{eq:3.1.2a})-(\ref{eq:3.1.2b}), 
in the unitary, irreducible representation labeled by $P$ and $s$, this 
condition takes the form 
\begin{equation}
-{\rm i}\lambda P\hbar\beta^i\bigl(m_i{\edth}'\phi-\bar m_i{\edth}\phi\bigr)
+\bigl(\alpha_ip^i-{\rm i}\lambda\hbar s\frac{\beta_ip^i}{P}\bigr)\phi=
\bigl(\langle{\bf p}(\alpha)\rangle_\phi-{\rm i}\lambda\langle{\bf J}(\beta)
\rangle_\phi\bigr)\phi. \label{eq:3.2.3}
\end{equation}
Using $\beta^i=(0,0,1)$ and the explicit form of the operators ${\edth}$ and 
${\edth}'$ and of the Cartesian components of the vectors $m_i$ and $\bar 
m_i$ expressed by the polar coordinates $(\theta,\varphi)$ (see Appendix 
\ref{sub-A.1}), we obtain the differential equation 
\begin{equation}
\frac{\partial\ln\phi}{\partial\varphi}=\frac{P}{\lambda\hbar}\sqrt{1-
\alpha^2_3}\sin\theta\cos(\varphi-\alpha)+\Bigl(\frac{P}{\lambda\hbar}
\alpha_3\cos\theta-\frac{1}{\lambda\hbar}\langle{\bf p}(\alpha)\rangle_\phi
\Bigr)+{\rm i}\bigl(\frac{1}{\hbar}\langle{\bf J}_3\rangle_\phi+s\bigr),
\label{eq:3.2.4}
\end{equation}
where we parameterized $(\alpha_1,\alpha_2)$ as $\sqrt{1-\alpha^2_3}(\cos
\alpha,\sin\alpha)$. Its solution is 
\begin{equation*}
\phi=A\exp\Bigl\{\frac{P}{\lambda\hbar}\sqrt{1-\alpha^2_3}\sin\theta\sin(
\varphi-\alpha)+\bigl(\frac{P}{\lambda\hbar}\alpha_3\cos\theta-\frac{1}{
\lambda\hbar}\langle{\bf p}(\alpha)\rangle_\phi\bigr)\varphi+{\rm i}\bigl(
\frac{1}{\hbar}\langle{\bf J}_3\rangle_\phi+s\bigr)\varphi\Bigr\},
\end{equation*}
where $A=A(\theta)$ is an arbitrary complex \emph{function} of $\theta$. 
This $\phi$ is periodic in $\varphi$ with $2\pi$ period precisely when the 
second term is vanishing for any value of $\theta$, and, in the third term, 
the coefficient of $\varphi$ between the round brackets is an integer, say 
$\ell$. This holds precisely when 
\begin{equation}
\alpha_3=0, \hskip 25pt \langle{\bf p}(\alpha)\rangle_\phi=0, \hskip 25pt 
\langle{\bf J}_3\rangle_\phi=(\ell-s)\hbar, \hskip 20pt \ell\in\mathbb{Z}.
\label{eq:3.2.5}
\end{equation}
Then the solution of (\ref{eq:3.2.4}) takes the form 
\begin{equation}
\phi=A\exp\Bigl(\frac{P\sin\theta}{\lambda\hbar}\sin\bigl(\varphi-\alpha
\bigr)+{\rm i}\ell\varphi\Bigr). \label{eq:3.2.6}
\end{equation}
Although its structure for any given $\theta$ is just that of solution 
(\ref{eq:2.2}) for the $E(2)$-invariant system, it depends on one arbitrary 
complex \emph{function}, $A=A(\theta)$, restricted only by the normalization 
condition 
\begin{equation}
1=\int_{\cal S}\vert\phi\vert^2{\rm d}{\cal S}=2\pi P^2\int_0^{\pi}\vert A\vert
^2I_0\sin\theta{\rm d}\theta. \label{eq:3.2.7}
\end{equation}
Here $I_0$ is the 0th modified Bessel function of the first kind, whose 
argument now is $2P\sin\theta/(\lambda\hbar)$. Using (\ref{eq:2.3}), 
(\ref{eq:2.6}), (\ref{eq:3.2.6}) and this normalization condition, it is 
straightforward to calculate $\langle{\bf p}_i\rangle_\phi$ and $\langle
{\bf J}_3\rangle_\phi$, and we do, in fact, recover $\langle{\bf p}(\alpha)
\rangle_\phi=0$ and $\langle{\bf J}_3\rangle_\phi=(\ell-s)\hbar$ (and, as we 
expect, $\langle\delta^{ij}{\bf p}_i{\bf p}_j\rangle_\phi=P^2$, too); and, in 
addition, we find that 
\begin{equation*}
\sin\alpha\langle{\bf p}_1\rangle_\phi-\cos\alpha\langle{\bf p}_2\rangle_\phi=
-2\pi P^3\int_0^{\pi}\vert A\vert^2I_1\sin^2\theta{\rm d}\theta.
\end{equation*}
Using $\alpha_3=0$, $\langle({\bf J}_3)^2\rangle_\phi=\langle{\bf J}_3\phi,
{\bf J}_3\phi\rangle$ and the recurrence relation $x(I_0(x)-I_2(x))=2I_1(x)$, 
similar calculations yield 
\begin{equation}
\bigl(\Delta_\phi{\bf p}(\alpha)\bigr)^2=\lambda^2\bigl(\Delta_\phi{\bf J}_3
\bigr)^2=\pi\lambda\hbar P^3\int_0^{\pi}\vert A\vert^2I_1\sin^2\theta{\rm d}
\theta. \label{eq:3.2.8}
\end{equation}
Therefore, the product uncertainty is $\Delta_\phi{\bf p}(\alpha)\Delta
_\phi{\bf J}_3=(\Delta_\phi{\bf p}(\alpha))^2/\lambda$; and the equality does, 
in fact, hold in (\ref{eq:3.2.1}). 

Next we clarify the $\lambda\to\infty$ and $\lambda\to0$ limits of 
uncertainties. Using the recurrence relation and re-expressing $I_1$ in terms 
of $I_0$ and $I_2$, and recalling that the Bessel functions $I_k(x)$ are 
non-negative for $x\geq0$, by the normalization condition (\ref{eq:3.2.7}) 
equation (\ref{eq:3.2.8}) gives that 
\begin{equation*}
\bigl(\Delta_\phi{\bf p}(\alpha)\bigr)^2=\pi P^4\int^{\pi}_0\vert A\vert^2
\bigl(I_0-I_2\bigr)\sin^3\theta\,{\rm d}\theta\leq\pi P^4\int^{\pi}_0\vert A
\vert^2I_0\sin\theta\,{\rm d}\theta=\frac{1}{2}P^2; 
\end{equation*}
i.e. $\Delta_\phi{\bf p}(\alpha)$ is \emph{bounded}. Hence, by $\Delta_\phi
{\bf J}(\beta)=\Delta_\phi{\bf p}(\alpha)/\lambda$ both $\Delta_\phi{\bf J}
(\beta)$ and the product uncertainty $\Delta_\phi{\bf p}(\alpha)\Delta_\phi
{\bf J}(\beta)$ tend to zero in the $\lambda\to\infty$ limit. To see the 
$\lambda\to0$ limit, first observe that, by (\ref{eq:3.2.7}), the 
asymptotic behaviour of $\vert A\vert^2$ is determined by that of $I_0$: 
with the notation $x_0:=2P/(\lambda\hbar)$, for large $x_0$ it is 
\begin{equation*}
\vert A\vert^2\sim\frac{1}{2\pi^2P^2}\exp\bigl(-x_0\sin\theta\bigr)\Bigl(
\frac{8x_0\sqrt{2\pi x_0\sin\theta}}{1+8x_0\sin\theta}+\cdots\Bigr).
\end{equation*}
Taking into account the asymptotic form of $I_1$ for large $x$, we find that 
$(\Delta_\phi{\bf p}(\alpha))^2=P\hbar\lambda/\pi+O(\lambda^2)$, and hence 
$(\Delta_\phi{\bf J}(\beta))^2$ diverges, while the product uncertainty tends 
to the \emph{finite} value $\hbar P/\pi$ in the $\lambda\to0$ limit. 

Therefore, \emph{the most classical states for the pair $({\bf p}(\alpha),
{\bf J}(\beta))$ of observables can exist only if the two directions are 
orthogonal to each other, $\alpha_i\beta^i=\alpha_3=0$}. 
However, with $\alpha_3=0$ the operators ${\bf p}(\alpha)$ and ${\bf J}
(\beta)$ generate an $e(2)$ sub-Lie algebra in $e(3)$: with the definition 
${\bf p}^\bot(\alpha):=\alpha^i\beta^j\varepsilon_{ijk}{\bf p}^k$ one has, in 
fact, that $[{\bf p}(\alpha),{\bf J}(\beta)]={\rm i}\hbar\,{\bf p}^\bot
(\alpha)$, $[{\bf p}^\bot(\alpha),{\bf J}(\beta)]=-{\rm i}\hbar\,{\bf p}
(\alpha)$ and $[{\bf p}(\alpha),{\bf p}^\bot(\alpha)]=0$. Thus, it is not a 
surprise that the states (\ref{eq:3.2.6}) are similar to the ones obtained 
for the $E(2)$-invariant systems and depend on an integer $\ell$ and the 
parameter $\lambda$ (although they depend on \emph{one almost free function 
of $\theta$}, too, which is restricted only by the normalization condition). 
The range of $\lambda$ is unrestricted, which can be any positive number 
though the expectation values $\langle{\bf p}(\alpha)\rangle_\phi$ and 
$\langle{\bf J}(\beta)\rangle_\phi$ could take only discrete values. 
$\langle{\bf J}(\beta)\rangle_\phi/\hbar$ is half-odd-integer iff $s$ is. 
The product uncertainty $\Delta_\phi{\bf p}(\alpha)\Delta_\phi{\bf J}
(\beta)$ can be made as small as we wish by choosing $\lambda$ large enough, 
i.e. when we approach an eigenstate of ${\bf J}(\beta)$; but, in the 
$\lambda\to0$ limit (i.e. when we approach an eigenstate of ${\bf p}
(\alpha)$) the product uncertainty cannot be made smaller than the finite 
value $\hbar P/\pi$. 


\subsubsection{A summary of the most classical states for the 
observables $({\bf J}(\alpha),{\bf J}(\beta))$}
\label{sub-3.2.2}

The case of two angular momentum components, ${\bf J}(\alpha):=\alpha^i
{\bf J}_i$ and ${\bf J}(\beta):=\beta^i{\bf J}_i$ for any two unit vectors 
$\alpha^i$ and $\beta^i$ for which $\alpha_i\beta^i\not=\pm1$, has already 
been clarified in \cite{Sz1}: both the expectation values and the standard 
deviations and the corresponding wave functions have been determined. 
Thus, for the sake of completeness, we only summarize the key results and 
present those details that we need in subsection \ref{sub-3.3.2}. 

Condition (\ref{eq:1.1}) of the equality in the uncertainty relation 
for ${\bf J}(\alpha)$ and ${\bf J}(\beta)$ is 
\begin{equation}
\bigl({\bf J}(\alpha)-{\rm i}\lambda{\bf J}(\beta)\bigr)\phi=\Bigl(\langle
{\bf J}(\alpha)\rangle_\phi-{\rm i}\lambda\langle{\bf J}(\beta)\rangle_\phi
\Bigr)\phi=:\hbar C\phi \label{eq:3.2.9}
\end{equation}
for some $\lambda>0$. By (\ref{eq:3.1.2b}) in the irreducible representation 
labeled by $P$ and $s$, this takes the explicit form  
\begin{equation}
P\bigl(\alpha^i-{\rm i}\lambda\beta^i\bigr)\Bigl(m_i{\edth}'\phi-\bar m_i
{\edth}\phi+s\frac{p_i}{P^2}\phi\Bigr)=C\phi. \label{eq:3.2.9a}
\end{equation}
Using spinorial techniques of general relativity, and in particular the 
principal spinors of the spinor form of the complex spatial vector $\alpha_i
-{\rm i}\lambda\beta_i$, we have already determined the complex eigenvalue 
$C$. It is given by 
\begin{equation}
C=m\sqrt{1-\lambda^2-2{\rm i}\lambda\alpha_3}, \label{eq:3.2.10}
\end{equation}
where $m=-j,-j+1,...,j$, and $j$ is such that $j=\vert s\vert,\vert s
\vert+1,...$. Hence, by (\ref{eq:3.2.10}), the expectation values are 
\begin{eqnarray}
\langle{\bf J}(\alpha)\rangle_\phi\!\!\!\!&=\!\!\!\!&m\frac{\hbar}
  {\sqrt{2}}\sqrt{1-\lambda^2+\sqrt{(1-\lambda^2)^2+4\lambda^2\alpha^2_3}}, 
  \label{eq:3.2.11a}\\
\lambda\langle{\bf J}(\beta)\rangle_\phi\!\!\!\!&=\!\!\!\!&{\rm sign}
  (\alpha_3)\,m\frac{\hbar}{\sqrt{2}}\sqrt{\lambda^2-1+\sqrt{(1-\lambda^2)
  ^2+4\lambda^2\alpha^2_3}} \label{eq:3.2.11b}
\end{eqnarray}
if $\alpha_3\not=0$; these are $m\hbar\sqrt{1-\lambda^2}$ and 0, respectively, 
if $\alpha_3=0$ and $\lambda<1$; while these are 0 and $m\hbar\sqrt{\lambda^2
-1}$, respectively, if $\alpha_3=0$ and $\lambda>1$. Thus, these expectation 
values depend on the discrete `quantum number' $m$, and the two continuous 
parameters $\alpha_3$ and $\lambda$. The expectation values are zero if 
$\alpha_3=0$ and $\lambda=1$. 

However, for given non-zero $m$, $\langle{\bf J}(\alpha)\rangle_\phi$ is a 
continuous function on the parameter space ${\cal P}:=\{(\alpha_3,\lambda)\,
\vert\,\alpha_3\in(-1,1),\,\lambda\in(0,\infty)\}$, but $\langle{\bf J}
(\beta)\rangle_\phi$ is \emph{not} continuous at $\alpha_3=0$ for $\lambda
>1$. Since physically neither angular momentum component is distinguished 
over the other, in \cite{Sz1} we concluded that the two expectation values 
should behave in a symmetric way. In fact, the standard deviations showed 
this symmetry: they turned out to be continuous on ${\cal P}$, and the 
product uncertainty $\Delta_\phi{\bf J}(\alpha)\Delta_\phi{\bf J}(\beta)$ 
tended to zero \emph{both} in the $\lambda\to0$ and $\lambda\to\infty$ 
limits. Hence, the proper parameter space ${\cal R}$ on which the 
expectation values (and the states also) should depend is homeomorphic to 
the Riemann surface known in connection with the function $\sqrt{z}$ in 
complex analysis. This ${\cal R}$ is obtained from ${\cal P}$ by cutting 
it along the $\lambda\geq1$ segment of the $\alpha_3=0$ axis, and identifying 
the resulting edges with the corresponding opposite edges in a second copy 
of ${\cal P}$ that has been cut in the same way. The branch point of 
${\cal R}$ corresponds to the point $\alpha_3=0$ and $\lambda=1$ of 
${\cal P}$, and we refer to this as the \emph{exceptional case}, and to any 
other as the \emph{generic case}. Then, extending the expectation values to 
the second copy of ${\cal P}$ in ${\cal R}$ to be their own negative, the 
resulting expectation values do, in fact, become differentiable on ${\cal R}$. 

In \cite{Sz1}, we determined the eigenfunctions of the eigenvalue equation 
(\ref{eq:3.2.9a}) in closed form, too. Since it is precisely this technique 
that we will use in subsection \ref{sub-3.3.2}, we summarize its key points. 
Parameterizing the components of $\alpha_i$ by $\alpha_3$ via $(\alpha_1,
\alpha_2)=\sqrt{1-\alpha^2_3}(\cos\alpha,\sin\alpha)$, where $\alpha\in[0,
2\pi)$, and writing (\ref{eq:3.2.9a}) in the complex stereographic 
coordinates $(\zeta,\bar\zeta)$, we find that it is more convenient to use 
the coordinates $(\xi,\bar\xi)$ defined by $\xi:=\exp[-{\rm i}\alpha]\zeta$. 
In these coordinates, (\ref{eq:3.2.9a}) takes the form 
\begin{equation}
X(\ln\phi)=C-\frac{1}{2}s\Bigl(\sqrt{1-\alpha^2_3}\xi-\alpha_3+{\rm i}\lambda
\Bigr)-\frac{1}{2}s\Bigl(\sqrt{1-\alpha^2_3}\bar\xi-\alpha_3+{\rm i}\lambda
\Bigr), \label{eq:3.2.12}
\end{equation}
where the \emph{complex} vector field $X$ on the left is defined by 
\begin{equation}
X:=\Bigl(\frac{1}{2}\sqrt{1-\alpha_3^2}(1-\xi^2)+(\alpha_3-{\rm i}\lambda)
\xi\Bigr)\frac{\partial}{\partial\xi}-\Bigl(\frac{1}{2}\sqrt{1-\alpha_3^2}
(1-\bar\xi^2)+(\alpha_3-{\rm i}\lambda)\bar\xi\Bigr)\frac{\partial}{\partial
\bar\xi}. \label{eq:3.2.13}
\end{equation}
Introducing the notation 
\begin{equation}
\xi_{\pm}:=\frac{\alpha_3-{\rm i}\lambda\pm\sqrt{1-\lambda^2-2{\rm i}\lambda
\alpha_3}}{\sqrt{1-\alpha^2_3}}, \label{eq:3.2.14}
\end{equation}
in the \emph{generic case} the general \emph{local} solution of 
(\ref{eq:3.2.12}) is 
\begin{equation}
\phi=\phi_0\Bigl(\frac{(\xi-\xi_-)(\bar\xi-\xi_+)}{(\xi-\xi_+)(\bar\xi-\xi_-)}
\Bigr)^{m/2}\Bigl(\frac{(\xi-\xi_+)(\xi-\xi_-)}{(\bar\xi-\xi_+)(\bar\xi-\xi_-)}
\Bigr)^{s/2}, \label{eq:3.2.15}
\end{equation}
where $\phi_0$ is an arbitrary smooth complex function of 
\begin{equation}
w:=\frac{(\xi-\xi_+)(\bar\xi-\xi_+)}{(\xi-\xi_-)(\bar\xi-\xi_-)}. 
\label{eq:3.2.16}
\end{equation}
In particular, $\phi_0$ could be 
\begin{equation*}
\frac{(\xi-\xi_+)^a(\bar\xi-\xi_+)^a(\xi-\xi_-)^b(\bar\xi-\xi_-)^b}
{(1+\xi\bar\xi)^{a+b}}=w^a\Bigl(\frac{\xi_+-\xi_-}{\xi_+-\xi_- w}\Bigr)^{a+b}
\end{equation*}
with \emph{arbitrary} real $a$ and $b$. However, with such a general $\phi_0$ 
(\ref{eq:3.2.15}) is only a \emph{local} solution of (\ref{eq:3.2.9a}): we 
still have to ensure that $\phi$ be well defined even on small circles 
surrounding the poles, and be square integrable, too. These requirements 
restrict the structure of $\phi_0$, and, in particular, $a$ and $b$ are 
restricted to be only non-negative integer or half-odd-integer for which 
$a+b=j$. 

If $\lambda=1$ and $\alpha_3=0$ (exceptional case), then $\xi_\pm=-{\rm i}$ 
and the general \emph{local} solution of equation (\ref{eq:3.2.9a}) is 
\begin{equation}
\phi=\phi_0\bigl(\frac{{\rm i}+\xi}{{\rm i}+\bar\xi}\bigr)^s, 
\label{eq:3.2.17}
\end{equation}
where we have used that in the exceptional case the eigenvalue $C$ is zero, 
and $\phi_0$ is an arbitrary smooth complex function of 
\begin{equation}
v:=\frac{1}{{\rm i}+\xi}+\frac{1}{{\rm i}+\bar\xi}. \label{eq:3.2.18}
\end{equation}
In particular, $\phi_0$ could be 
\begin{equation*}
\Bigl(\frac{(\xi+{\rm i})(\bar\xi+{\rm i})}{(1+\xi\bar\xi)}\Bigr)^a
=\frac{1}{(1-{\rm i}v)^a}
\end{equation*}
with \emph{arbitrary} real $a$. However, by the requirement that the 
corresponding $\phi$ be well defined and square integrable, this $a$ is 
restricted to be a non-negative integer or half-odd-integer $j$ for which 
$j=\vert s\vert+n$ holds for some $n=0,1,2,...$. 

To summarize, for \emph{any} pair of angular momentum components in any 
irreducible representation of the $su(2)$ subalgebra labeled by $s$ and 
$j$, there is a 1-parameter family of most classical states which, in 
addition, depend on the discrete `quantum number' $m=-j,-j+1,...,j$, too. 
There is \emph{no} restriction on $\alpha_3$, i.e. on the angle between 
these two components. 


\subsection{The most classical states for the $({\bf p}^i,{\bf C}_i)$ 
system}
\label{sub-3.3}

Since the total angular momentum ${\bf J}_i$ can equivalently be given by 
the Casimir operator ${\bf W}$ and the centre-of-mass vector operator ${\bf 
C}_i$, it is natural to ask for the most classical states with respect to 
the pairs of observables $({\bf p}(\alpha),{\bf C}(\beta))$ and $({\bf C}
(\alpha),{\bf C}(\beta))$. We consider these cases separately.

\subsubsection{The most classical states for the observables $({\bf p}
(\alpha),{\bf C}(\beta))$}
\label{sub-3.3.1}

For any $\alpha^i,\beta^i\in\mathbb{R}^3$ satisfying $\alpha^i\alpha^j\delta
_{ij}=\beta^i\beta^j\delta_{ij}=1$, we form ${\bf p}(\alpha):=\alpha^i{\bf p}
_i$ and ${\bf C}(\beta):=\beta^i{\bf C}_i$. For these 
\begin{equation}
\Delta_\phi{\bf p}(\alpha)\,\Delta_\phi{\bf C}(\beta)\geq\frac{1}{2}\vert
\langle[{\bf p}(\alpha),{\bf C}(\beta)]\rangle_\phi\vert=\frac{\hbar}{2}
\vert P^2\alpha_i\beta^i-\langle{\bf p}(\alpha){\bf p}(\beta)\rangle_\phi
\vert, \label{eq:3.3.1}
\end{equation}
in which the equality holds precisely when, for some \emph{positive} 
$\lambda$, 
\begin{equation*}
\bigl({\bf p}(\alpha)-{\rm i}\lambda{\bf C}(\beta)\bigr)\phi=\Bigl(\langle
{\bf p}(\alpha)\rangle_\phi-{\rm i}\lambda\langle{\bf C}(\beta)\rangle_\phi
\Bigr)\phi 
\end{equation*}
holds. Using (\ref{eq:3.1.2a}) and (\ref{eq:3.1.2c}), this condition is the 
differential equation 
\begin{equation}
P^2\beta^i\bigl(m_i{\edth}'\phi+\bar m_i{\edth}\phi\bigr)+\bigl(
\frac{\alpha_i}{\lambda\hbar}-\beta_i\bigr)p^i\phi=\frac{1}{\lambda\hbar}
\Bigl(\langle{\bf p}(\alpha)\rangle_\phi-{\rm i}\lambda\langle{\bf C}(\beta)
\rangle_\phi\Bigr)\phi. \label{eq:3.3.2}
\end{equation}
As earlier, without loss of generality, we assume that $\beta^i=(0,0,1)$ and 
we use the parameterization $(\alpha_1,\alpha_2)=\sqrt{1-\alpha^2_3}(\cos
\alpha,\sin\alpha)$. Then, in the polar coordinates $(\theta,\varphi)$, this 
equation takes the form 
\begin{equation}
\frac{\partial\ln\phi}{\partial\theta}=-\frac{1}{\sin\theta}\frac{\langle
{\bf p}(\alpha)\rangle_\phi-{\rm i}\lambda\langle{\bf C}_3\rangle_\phi}
{P\lambda\hbar}+\frac{\alpha_3-\lambda\hbar}{\lambda\hbar}\cot\theta+
\frac{1}{\lambda\hbar}\sqrt{1-\alpha^2_3}\cos(\varphi-\alpha). 
\label{eq:3.3.3}
\end{equation}
Its solution is 
\begin{equation}
\phi=A\exp\Bigl(\frac{\theta}{\lambda\hbar}\sqrt{1-\alpha^2_3}\cos(\varphi
-\alpha)-{\rm i}\frac{\langle{\bf C}_3\rangle_\phi}{P\hbar}\ln\cot
\frac{\theta}{2}\Bigr)\sin^a\theta\tan^b\frac{\theta}{2}, \label{eq:3.3.4}
\end{equation}
where $A=A(\varphi)$ is an arbitrary periodic function of $\varphi$, and 
now the powers $a$ and $b$ are $a:=\alpha_3/\lambda\hbar-1$ and $b:=-\langle
{\bf p}(\alpha)\rangle_\phi/P\lambda\hbar$. 

Since the area element on ${\cal S}$ in the polar coordinates is ${\rm d}
{\cal S}=P^2\sin\theta\,d\theta\wedge d\varphi$ and the modulus of the first 
two factors in (\ref{eq:3.3.4}) is bounded on ${\cal S}$, the condition of 
the square-integrability of $\phi$ is equivalent to the existence of the 
integral $\int^\pi_0\sin^{2a+1}\theta\tan^{2b}(\theta/2){\rm d}\theta$. 
This condition is equivalent to $a>-1$ and $a+1>b>-a-1$; i.e. to 
\begin{equation}
\alpha_3>0, \hskip 25pt 
-P\alpha_3<\langle{\bf p}(\alpha)\rangle_\phi<P\alpha_3. 
\label{eq:3.3.5}
\end{equation}
Hence, by the requirement of the square-integrability of the wave function 
and $\alpha_3=\alpha_i\beta^i$, the angle between the directions $\alpha^i$ 
and $\beta^i$ must be zero or an acute angle; and the range of the 
expectation value $\langle{\bf p}(\alpha)\rangle_\phi$ is restricted by 
$\alpha_3$: $\vert\langle{\bf p}(\alpha)\rangle_\phi\vert<\alpha_3P\leq P$. 
However, in this interval $\langle{\bf p}(\alpha)\rangle_\phi$ is freely 
specifiable. 

Since $A=A(\varphi)$ is periodic, we may write it as $\sum_{m\in\mathbb{Z}}A_m
\exp({\rm i}m(\varphi-\alpha))$ with some complex constants $A_m$. Then, 
using (\ref{eq:2.6}), the normalization condition for $\phi$ is 
\begin{equation}
1=\int_{\cal S}\vert\phi\vert^2{\rm d}{\cal S}=2\pi P^2\sum_{m\in\mathbb{Z}}
\sum_{k=0}^\infty A_{k+m}\overline{A_m}\int^\pi_0I_k\sin^{2a+1}\theta\tan^{2b}
\frac{\theta}{2}\,{\rm d}\theta. \label{eq:3.3.6}
\end{equation}
Here $I_k(x)$ is the $k$th modified Bessel function of the first kind, and 
its argument now is $2\theta\sqrt{1-\alpha^2_3}/(\lambda\hbar)$. In 
particular, if in the expansion of $A(\varphi)$ there were only one non-zero 
coefficient, say $A_m$, then only $I_0$ would appear in (\ref{eq:3.3.6}). 

Then we can calculate the expectation values and uncertainties of the basic 
observables. However, for states with general $A(\varphi)$, these 
calculations are rather lengthy and technically involved, without yielding 
much more insight into the nature of the problem than in the special case 
when $A(\varphi)$ has the form $A_m\exp({\rm i}m(\varphi-\alpha))$ for a 
single $m$. Thus, for the sake of simplicity, we assume that $A(\varphi)$ 
has this special form. 

A direct consequence of (\ref{eq:3.3.2}) is that $\Delta_\phi{\bf C}(\beta)=
\Delta_\phi{\bf p}(\alpha)/\lambda$. Thus, we need to calculate only $\Delta
_\phi{\bf p}(\alpha)$. However, the calculation of $\langle({\bf p}(\alpha))
^2\rangle_\phi$ is a bit more complicated: using 
\begin{eqnarray*}
\hskip -20pt &{}&(\alpha_ip^i)^2=P^2\bigl(\sqrt{1-\alpha^2_3}\sin\theta\cos
 (\varphi-\alpha)+\alpha_3\cos\theta\bigr)^2 \\
\hskip -20pt &{}&=P^2\Bigl((1-\alpha^2_3)\frac{1}{2}\sin^2\theta\bigl(1+\cos
 (2(\varphi-\alpha)\bigr)+2\alpha_3\sqrt{1-\alpha^2_3}\cos\theta\sin\theta
 \cos(\varphi-\alpha)+\alpha^2_3\cos^2\theta\Bigr)
\end{eqnarray*}
and equation (\ref{eq:2.3}) for the modified Bessel functions, we find that 
\begin{eqnarray*}
\langle({\bf p}(\alpha))^2\rangle_\phi=2\pi P^4\vert A_m\vert^2\int^\pi_0\Bigl(
 \bigl(\!\!\!\!&\frac{1}{2}\!\!\!\!&(1-\alpha^2_3)\sin^2\theta+\alpha^2_3\cos^2
 \theta\bigr)\,I_0+2\alpha_3\sqrt{1-\alpha^2_3}\sin\theta\cos\theta\,I_1 \\
+\!\!\!\!&\frac{1}{2}\!\!\!\!&(1-\alpha^2_3)\sin^2\theta I_2\Bigr)\sin^{2a+1}
 \theta\tan^{2b}\frac{\theta}{2}\,{\rm d}\theta. 
\end{eqnarray*}
Then, expressing $I_1$ and $I_2$ by $I_0$ and its derivatives, by integration 
by parts, and using elementary trigonometric identities, we obtain 
\begin{eqnarray}
&{}&(\Delta_\phi{\bf p}(\alpha))^2=\langle({\bf p}(\alpha))^2\rangle_\phi-
  \bigl(\langle{\bf p}(\alpha)\rangle_\phi\bigr)^2 \label{eq:3.3.7} \\
&{}&=\frac{1}{2}\lambda\hbar\Bigl(\alpha_3P^2-2\pi P^4\vert A_m\vert^2\int
  ^\pi_0\bigl(\frac{1}{2}\lambda\hbar\sin^2\theta+\frac{\langle{\bf p}(\alpha)
  \rangle_\phi}{P}\cos\theta\bigr)\,I_0\sin^{2a+1} \theta\tan^{2b}\frac{\theta}
  {2}\,{\rm d}\theta\Bigr). \nonumber
\end{eqnarray}
$\langle{\bf p}(\alpha){\bf p}(\beta)\rangle_\phi=\langle{\bf p}(\alpha)
{\bf p}_3\rangle_\phi$ can be calculated in a similar way. It is given by 
\begin{equation}
\langle{\bf p}(\alpha){\bf p}(\beta)\rangle_\phi=2\pi P^4\vert A_m\vert^2\int
^\pi_0\bigl(\frac{1}{2}\lambda\hbar\sin^2\theta+\frac{\langle{\bf p}(\alpha)
\rangle_\phi}{P}\cos\theta\bigr)I_0\sin^{2a+1}\theta\tan^{2b}\frac{\theta}{2}
{\rm d}\theta, \label{eq:3.3.8}
\end{equation}
which is just the second term between the big round brackets in 
(\ref{eq:3.3.7}). Thus, by $\Delta_\phi{\bf C}(\beta)=\Delta_\phi{\bf p}
(\alpha)/\lambda$ and equations (\ref{eq:3.3.7})-(\ref{eq:3.3.8}), we have 
the explicit expression for the product uncertainty, too: 
\begin{equation*}
\Delta_\phi{\bf p}(\alpha)\Delta_\phi{\bf C}(\beta)=\frac{1}{2}\hbar\Bigl(
\alpha_i\beta^iP^2-\langle{\bf p}(\alpha){\bf p}(\beta)\rangle_\phi\Bigr). 
\end{equation*}
This depends on $\lambda$ only through the state $\phi$. 

To clarify the dependence of $\Delta_\phi{\bf p}(\alpha)\Delta_\phi{\bf C}
(\beta)$ on $\lambda$ with given $\langle{\bf p}(\alpha)\rangle_\phi$, let us 
recall that, by the normalization condition, $\vert A_m\vert^2$ depends on 
$\lambda$, too. Hence, by (\ref{eq:3.3.6}) and (\ref{eq:3.3.8}) 
\begin{equation}
\langle{\bf p}(\alpha){\bf p}(\beta)\rangle_\phi=P^2\frac{\int_0^\pi I_0\bigl(
\frac{\langle{\bf p}(\alpha)\rangle_\phi}{P}\cos\theta+\frac{1}{2}\lambda\hbar
\sin^2\theta\bigr)\sin^{2a+1}\theta\tan^{2b}\frac{\theta}{2}{\rm d}\theta}
{\int_0^\pi I_0\sin^{2a+1}\theta\tan^{2b}\frac{\theta}{2}{\rm d}\theta}.
\label{eq:3.3.9}
\end{equation}
Since $I_0\sin^{2a+1}\theta\tan^{2b}(\theta/2)$ is non-negative on the 
interval $[0,\pi]$ and $\sin^2\theta$, $\cos\theta\leq1$, the integrand in 
the numerator of (\ref{eq:3.3.9}) is not greater than $(\vert\langle
{\bf p}(\alpha)\rangle_\phi\vert/P+\lambda\hbar/2)I_0\sin^{2a+1}\theta$ $\tan
^{2b}(\theta/2)$; and it is not less than $-(\vert\langle{\bf p}(\alpha)
\rangle_\phi\vert/P)I_0\sin^{2a+1}\theta$ $\tan^{2b}(\theta/2)$. Hence, 
$-P\vert\langle{\bf p}(\alpha)\rangle_\phi\vert\leq\langle{\bf p}(\alpha)
{\bf p}(\beta)\rangle_\phi$ $\leq P\vert\langle{\bf p}(\alpha)\rangle_\phi
\vert+\lambda\hbar P^2/2$, yielding 
\begin{equation*}
\frac{1}{2}\hbar P\Bigl(\alpha_3P-\vert\langle{\bf p}(\alpha)\rangle_\phi
\vert-\frac{1}{2}\lambda\hbar P\Bigr)\leq\Delta_\phi{\bf p}(\alpha)\Delta_\phi
{\bf C}(\beta)\leq\frac{1}{2}\hbar P\Bigl(\alpha_3P+\vert\langle{\bf p}
(\alpha)\rangle_\phi\vert\Bigr).
\end{equation*}
Since by (\ref{eq:3.3.5}) $\alpha_3P-\vert\langle{\bf p}(\alpha)\rangle
_\phi\vert$ is strictly positive, by $\Delta_\phi{\bf p}(\alpha)=\lambda
\Delta_\phi{\bf C}(\beta)$ this implies that, in the $\lambda\to0$ limit, 
$\Delta_\phi{\bf C}(\beta)$ diverges and $\Delta_\phi{\bf p}(\alpha)$ tends to 
zero; and $\lim_{\lambda\to0}\Delta_\phi{\bf p}(\alpha)\Delta_\phi{\bf C}(\beta)$ 
is not less than $\frac{1}{2}\hbar P(\alpha_3P-\vert\langle{\bf p}(\alpha)
\rangle_\phi\vert)>0$ but less than $\hbar P^2\alpha_3$. Hence this limit is 
finite and positive. 

The calculation of the $\lambda\to\infty$ limit is a bit longer. By 
(\ref{eq:2.4}) 
\begin{equation*}
I_0\sin^{2(a+1)}\theta\tan^{2b}\frac{\theta}{2}=\Bigl(1+\bigl(\frac{\theta
\sqrt{1-\alpha^2_3}}{\lambda\hbar}\bigr)^2+O(\lambda^{-4})\Bigr)2^{2(a+1)}
\bigl(\sin^2\frac{\theta}{2}\bigr)^{a+b+1}\bigl(\cos^2\frac{\theta}{2}\bigr)
^{a-b+1},
\end{equation*}
where the powers are 
\begin{equation*}
a+b+1=\frac{1}{\lambda\hbar P}\bigl(\alpha_3P-\langle{\bf p}(\alpha)\rangle
_\phi\bigr)=:\frac{A}{\lambda}>0, \hskip 20pt
a-b+1=\frac{1}{\lambda\hbar P}\bigl(\alpha_3P+\langle{\bf p}(\alpha)\rangle
_\phi\bigr)=:\frac{B}{\lambda}>0.
\end{equation*}
Hence, for large $\lambda$, the denominator of (\ref{eq:3.3.9}) is 
\begin{eqnarray*}
&{}&\int^\pi_0I_0\sin^{2a+1}\theta\tan^{2b}\frac{\theta}{2}{\rm d}\theta=\int
  ^\pi_0\frac{1}{\sin\theta}\sin^{2(a+1)}\theta\tan^{2b}\frac{\theta}{2}{\rm d}
  \theta+O(\lambda^{-2}) \\
&{}&=\frac{1}{2}2^{2(a+1)}\int^{\pi/2}_0(\sin x)^{2a+2b+1}(\cos x)^{2a-2b+1}{\rm d}x
  +O(\lambda^{-2}) \\
&{}&=\frac{1}{2}2^{2(a+1)}\beta\bigl(\frac{A}{\lambda},\frac{B}{\lambda}\bigr)
  +O(\lambda^{-2})=\frac{1}{2}2^{2(a+1)}\frac{\Gamma(\frac{A}{\lambda})\Gamma
  (\frac{B}{\lambda})}{\Gamma(\frac{A+B}{\lambda})}+O(\lambda^{-2}),
\end{eqnarray*}
where $\beta(x,y)$ is Euler's beta function, which can also be re-expressed 
by the $\Gamma$ function as above (see \cite{AbSt}, p. 258). (Here, to avoid 
confusion, we denoted the beta function by $\beta(x,y)$ instead of the 
standard $B(x,y)$.) Using also the property $\Gamma(x+1)=x\Gamma(x)$ of 
the $\Gamma$ function, in a similar way we obtain that the integrals in the 
numerator of (\ref{eq:3.3.9}) are 
\begin{eqnarray*}
&{}&\int^\pi_0I_0\sin\theta\sin^{2(a+1)}\theta\tan^{2b}\frac{\theta}{2}
  {\rm d}\theta=2\frac{AB}{(A+B)(A+B+\lambda)}2^{2(a+1)}\frac{\Gamma(
  \frac{A}{\lambda})\Gamma(\frac{B}{\lambda})}{\Gamma(\frac{A+B}{\lambda})}
  +O(\lambda^{-2}), \\
&{}&\int^\pi_0I_0\frac{\cos\theta}{\sin\theta}\sin^{2(a+1)}\theta\tan^{2b}
  \frac{\theta}{2}{\rm d}\theta=\frac{1}{2}\frac{B-A}{B+A}2^{2(a+1)}
  \frac{\Gamma(\frac{A}{\lambda})\Gamma(\frac{B}{\lambda})}{\Gamma(
  \frac{A+B}{\lambda})}+O(\lambda^{-2}).
\end{eqnarray*}
Substituting these into (\ref{eq:3.3.9}) we find that 
\begin{equation*}
\langle{\bf p}(\alpha){\bf p}(\beta)\rangle_\phi=\alpha_3P^2-\frac{2}{\hbar}
\Bigl(\alpha^2_3P^2-\langle{\bf p}(\alpha)\rangle_\phi^2\Bigr)\frac{1}
{\lambda}+O(\lambda^{-2}).
\end{equation*}
Hence, in the $\lambda\to\infty$ limit, $(\Delta_\phi{\bf p}(\alpha))^2$ 
tends to $\alpha^2_3P^2-\langle{\bf p}(\alpha)\rangle_\phi^2>0$, while both 
$\Delta_\phi{\bf C}(\beta)$ and $\Delta_\phi{\bf p}(\alpha)\Delta_\phi{\bf C}
(\beta)$ tend to zero as $1/\lambda$. 

Therefore, to summarize, by (\ref{eq:3.3.5}) and $\alpha_3=\alpha_i\beta^i$, 
the most classical states exist for ${\bf p}(\alpha)$ and ${\bf C}(\beta)$ 
\emph{only if} the angle between the directions $\alpha^i$ and $\beta^i$ is 
\emph{zero or an acute angle}. If this is the right or a blunt angle, then 
\emph{no} such state exists. There is no restriction on the spin $s$. The 
solutions depend on one free function, and the solution even with $A=A_m\exp
({\rm i}m(\varphi-\alpha))$ forms a three-parameter family, parameterized by 
the expectation values, $\langle{\bf p}(\alpha)\rangle_\phi$ and $\langle
{\bf C}(\beta)\rangle_\phi$, and by $\lambda$. However, while the range of 
the expectation value $\langle{\bf C}(\beta)\rangle_\phi$ and the positive 
parameter $\lambda$ is unrestricted, $\langle{\bf p}(\alpha)\rangle_\phi$ is 
restricted by (\ref{eq:3.3.5}), but no more. The closer the $\alpha^i$ to be 
orthogonal to $\beta^i$, the shorter the interval around zero from which 
$\langle{\bf p}(\alpha)\rangle_\phi$ can take values. In the limit $\alpha_3
\to0$, i.e. when $\alpha^i$ is getting to be orthogonal to $\beta^i$, the 
second condition in (\ref{eq:3.3.5}) reduces to $\langle{\bf p}(\alpha)
\rangle_\phi\to0$ (compare with the conditions (\ref{eq:3.2.5}) in the 
$({\bf p}^i,{\bf J}_i)$ system). As functions of $\lambda$, the asymptotic 
behaviour of the standard deviations and the product uncertainty is similar 
to that of the standard deviations and the product uncertainty in the 
$({\bf p}^i,{\bf J}_i)$ case. 


\subsubsection{The non-existence of the most classical states for the pairs 
of observables $({\bf C}(\alpha),{\bf C}(\beta))$}
\label{sub-3.3.2}

The philosophy of the calculation is similar to the previous ones: we form 
${\bf C}(\alpha):=\alpha^i{\bf C}_i$ and ${\bf C}(\beta):=\beta^i{\bf C}_i$, 
in which $\beta^i=(0,0,1)$ and the unit vector $\alpha_i$ is not parallel 
with $\beta_i$, i.e. $\alpha_3\not=\pm1$. For these observables 
\begin{equation}
\Delta_\phi{\bf C}(\alpha)\,\Delta_\phi{\bf C}(\beta)\geq\frac{\hbar}{2}
\vert\hbar Ps\,\alpha^i\beta^j\varepsilon_{ijk}\langle{\bf p}^k\rangle_\phi-
\alpha^i\beta^j\langle{\bf C}_i{\bf p}_j-{\bf C}_j{\bf p}_i\rangle_\phi\vert 
\label{eq:3.3.11}
\end{equation}
follows. Here, the equality holds precisely when 
\begin{equation}
\Bigl({\bf C}(\alpha)-{\rm i}\lambda{\bf C}(\beta)\Bigr)\phi=\Bigl(\langle
{\bf C}(\alpha)\rangle_\phi-{\rm i}\lambda\langle{\bf C}(\beta)\rangle_\phi
\Bigr)\phi \label{eq:3.3.12}
\end{equation}
for some $\lambda>0$. This, in the unitary, irreducible representation 
labeled by $P$ and $s$, is equivalent to 
\begin{equation}
P\bigl(\alpha^i-{\rm i}\lambda\beta^i\bigr)\Bigl(m_i{\edth}'\phi+\bar m_i
{\edth}\phi-\frac{p_i}{P^2}\phi\Bigr)=-\frac{\rm i}{P\hbar}\bigl(\langle
{\bf C}(\alpha)\rangle_\phi-{\rm i}\lambda\langle{\bf C}(\beta)\rangle_\phi
\bigr)\phi=:-{\rm i}C\phi. \label{eq:3.3.13}
\end{equation}
Since ${\bf C}(\alpha)-{\rm i}\lambda{\bf C}(\beta)$ does not commute with 
the Casimir operator ${\bf J}_i{\bf J}^i$ of the subalgebra $su(2)\subset 
e(3)$, the spinorial method of \cite{Sz1} cannot be used directly to solve 
the eigenvalue problem (\ref{eq:3.3.12}). Thus, our strategy might be to 
find first the general \emph{local} solution $\phi$ of (\ref{eq:3.3.13}), 
and then to determine the eigenvalue $C$ from the requirement that $\phi$ be 
globally well defined and square integrable on ${\cal S}$. In fact, this 
strategy could have been followed successfully in the determination of the 
most classical states for the observables $({\bf J}(\alpha),{\bf J}
(\beta))$ in subsection \ref{sub-3.2.2} (or in \cite{Sz1}). Nevertheless, 
following this strategy, we show that \emph{there is no such states 
represented by any square integrable wave function that would also be 
differentiable on ${\cal S}$ except only at isolated points}. 

\bigskip
\noindent
$\bullet$ {\bf The general local solutions}

\noindent
As in subsection \ref{sub-3.2.2}, we write equation (\ref{eq:3.3.13}) in the 
rotated complex stereographic coordinates $(\xi,\bar\xi)$. It is 
\begin{equation}
Y(\ln\phi)=-{\rm i}C-\frac{1}{2}s\sqrt{1-\alpha_3^2}(\xi-\bar\xi)+\frac{1}{1+
\xi\bar\xi}\Bigl(\sqrt{1-\alpha_3^2}(\xi+\bar\xi)+(\alpha_3-{\rm i}\lambda)
\bigl(\xi\bar\xi-1\bigr)\Bigr), \label{eq:3.3.14}
\end{equation}
where the complex vector field $Y$ on its left hand side is defined by 
\begin{equation}
Y:=\Bigl(\frac{1}{2}\sqrt{1-\alpha_3^2}(1-\xi^2)+(\alpha_3-{\rm i}\lambda)
\xi\Bigr)\frac{\partial}{\partial\xi}+\Bigl(\frac{1}{2}\sqrt{1-\alpha_3^2}
(1-\bar\xi^2)+(\alpha_3-{\rm i}\lambda)\bar\xi\Bigr)\frac{\partial}
{\partial\bar\xi}. \label{eq:3.3.15}
\end{equation}
It might be worth noting that the vector field $Y$ is orthogonal to $X$, 
introduced in subsection \ref{sub-3.2.2}, with respect to the metric 
(\ref{eq:A.1.4}) on ${\cal S}$. Also, it has vanishing Lie bracket with $X$, 
and hence, in particular, the functions $v$ and $u$ defined below and 
satisfying $X(u)=1$, $X(v)=0$, $Y(u)=0$ and $Y(v)=1$ form a local complex 
coordinate system adapted to the vector fields $Y$ and $X$ on the 
complexified ${\cal S}$. 

We solve (\ref{eq:3.3.14}) by rewriting the terms on the right in the form 
of derivatives in the direction $Y$. Since 
\begin{equation*}
-\bigl(\sqrt{1-\alpha^2_3}\xi-\alpha_3+{\rm i}\lambda\bigr)=Y\Bigl(\ln\bigl(
\frac{1}{2}\sqrt{1-\alpha^2_3}(1-\xi^2)+(\alpha_3-{\rm i}\lambda)\xi\bigr)
\Bigr),
\end{equation*}
the second term on the right of (\ref{eq:3.3.14}) can be written in the form 
\begin{equation*}
Y\Bigl(\ln\bigl(\frac{(\xi-\xi_+)(\xi-\xi_-)}{(\bar\xi-\xi_+)(\bar\xi-\xi_-)}
\bigr)^{s/2}\Bigr),
\end{equation*}
where we used the notation introduced in (\ref{eq:3.2.14}). Also, a simple 
calculation shows that 
\begin{eqnarray*}
Y\bigl(\ln(1+\xi\bar\xi)\bigr)\!\!\!\!&=\!\!\!\!&\sqrt{1-\alpha^2_3}\frac{\xi
 +\bar\xi}{1+\xi\bar\xi}+(\alpha_3-{\rm i}\lambda)\frac{\xi\bar\xi-1}
 {1+\xi\bar\xi}+ \\
\!\!\!\!&+\!\!\!\!&\frac{1}{2}Y\Bigl(\ln\bigl((\xi-\xi_+)(\xi-\xi_-)(\bar\xi
 -\xi_+)(\bar\xi-\xi_-)\bigr)\Bigr),
\end{eqnarray*}
and hence the last term on the right of (\ref{eq:3.3.14}) is 
\begin{equation*}
Y\Bigl(\ln\frac{1+\xi\bar\xi}{\sqrt{(\xi-\xi_+)(\xi-\xi_-)(\bar\xi-\xi_+)
(\bar\xi-\xi_-)}}\Bigr). 
\end{equation*}
Thus, if we knew the solution $v$ of $Y(v)=1$, then the first term on the 
right of (\ref{eq:3.3.14}) would have the form $Y(-{\rm i}Cv)$ and hence
we already would have a \emph{particular} solution of (\ref{eq:3.3.14}). 
To have its \emph{general} solution, we need the general solution $\phi_0$ 
of the \emph{homogeneous} equation $Y(\phi_0)=0$, too. 

In the generic case (i.e. when $(1-\lambda)^2+\alpha^2_3>0$, see subsection 
\ref{sub-3.2.2}), the solution of $Y(v)=1$ is precisely the function $v:=
u_1+u_2$ (up to the addition of the solution of the homogeneous equation 
$Y(\phi_0)=0$), where $u_1$ and $u_2$ are given by 
\begin{equation}
u_1=-\frac{1}{2\sqrt{1-\lambda^2-2{\rm i}\lambda\alpha_3}}\ln\frac{\xi-\xi_+}
{\xi-\xi_-}, \hskip 20pt
u_2=-\frac{1}{2\sqrt{1-\lambda^2-2{\rm i}\lambda\alpha_3}}\ln\frac{\bar\xi-
\xi_+}{\bar\xi-\xi_-}. \label{eq:3.3.16}
\end{equation}
The \emph{general} solution $\phi_0$ of $Y(\phi_0)=0$ is an arbitrary smooth 
complex function of $u:=u_1-u_2$, i.e. $\phi_0$ is an arbitrary smooth 
complex function of 
\begin{equation}
z:=\frac{(\xi-\xi_-)(\bar\xi-\xi_+)}{(\bar\xi-\xi_-)(\xi-\xi_+)}.
\label{eq:3.3.17}
\end{equation}
Thus, $\phi_0$ depends on $\xi$ and $\bar\xi$ \emph{only through} $z$: $\phi
_0=\phi_0(z(\xi,\bar\xi))$. Therefore, in the generic case, the general 
\emph{local} solution of equation (\ref{eq:3.3.14}) is 
\begin{eqnarray}
\phi\!\!\!\!&=\!\!\!\!&\phi_0\Bigl(\frac{(\xi-\xi_+)(\bar\xi-\xi_+)}
  {(\xi-\xi_-)(\bar\xi-\xi_-)}\Bigr)^{M/2}\Bigl(\frac{(\xi-\xi_+)(\xi-\xi_-)}
  {(\bar\xi-\xi_+) (\bar\xi-\xi_-)}\Bigr)^{s/2}\frac{1+\xi\bar\xi}
  {\sqrt{(\xi-\xi_+)(\xi-\xi_-)(\bar\xi-\xi_+)(\bar\xi-\xi_-)}} \nonumber \\
\!\!\!\!&=\!\!\!\!&\phi_0\bigl(1+\xi\bar\xi\bigr)\bigl(\xi-\xi_+)^{\frac{1}{2}
  (M+s-1)}(\bar\xi-\xi_+)^{\frac{1}{2}(M-s-1)}(\xi-\xi_-)^{\frac{1}{2}(-M+s-1)}
  (\bar\xi-\xi_-)^{\frac{1}{2}(-M-s-1)}, \hskip 25pt\label{eq:3.3.18}
\end{eqnarray}
where the power $M$ is defined by 
\begin{equation}
M:=\frac{{\rm i}C}{\sqrt{1-\lambda^2-2{\rm i}\lambda\alpha_3}}. 
\label{eq:3.3.19}
\end{equation}
Note that, in general, this might be complex: $M=M_1+{\rm i}M_2$, where $M_1,
M_2\in\mathbb{R}$. 

In the exceptional case (i.e. when $\alpha_3=0$ and $\lambda=1$), the 
solution of $Y(v)=1$ is $v=u_1+u_2$, while that of $Y(\phi_0)=0$ is an 
arbitrary smooth complex function of $u=u_1-u_2$, where now 
\begin{equation}
u_1:=\frac{1}{{\rm i}+\xi}, \hskip 25pt
u_2:=\frac{1}{{\rm i}+\bar\xi}. \label{eq:3.3.20}
\end{equation}
Hence, the \emph{local} solution of (\ref{eq:3.3.14}) is 
\begin{equation}
\phi=\phi_0\exp\Bigl(-{\rm i}C\bigl(\frac{1}{{\rm i}+\xi}+\frac{1}{{\rm i}+
\bar\xi}\bigr)\Bigr)\Bigl(\frac{{\rm i}+\xi}{{\rm i}+\bar\xi}\Bigr)^s\frac{1
+\xi\bar\xi}{({\rm i}+\xi)({\rm i}+\bar\xi)}, \label{eq:3.3.21}
\end{equation}
where $\phi_0$ is still an arbitrary smooth complex function of $u$, and 
hence $\phi_0$ depends on $\xi$ and $\bar\xi$ only through $u$.

\bigskip
\noindent
$\bullet$ {\bf The non-existence of differentiable wave functions: The generic
case}

\noindent
The coefficient of $\phi_0$ in (\ref{eq:3.3.18}), i.e. 
\begin{equation}
F:=\bigl(1+\xi\bar\xi\bigr)\bigl(\xi-\xi_+)^{\frac{1}{2}(M+s-1)}(\bar\xi-\xi_+)
^{\frac{1}{2}(M-s-1)}(\xi-\xi_-)^{\frac{1}{2}(-M+s-1)}(\bar\xi-\xi_-)^{\frac{1}{2}(-M-s-1)},
\label{eq:3.3.22}
\end{equation}
is bounded in the $\xi\to\infty$ limit, but, depending on the value of $s$ 
and $M$, it is either singular or finite (or zero) at the points $\xi_+$, 
$\bar\xi_+$, $\xi_-$ and $\bar\xi_-$. In addition, $F$ may have 
discontinuities along lines in the $(\xi,\bar\xi)$-plane. Then the role of 
the still free function $\phi_0$ of $z$ in (\ref{eq:3.3.18}) would be to 
ensure the square integrability of $\phi$ in such a way that the zeros of 
$\phi_0$ would compensate the singularities of $F$ to obtain square 
integrable $\phi$. ($\phi_0$ may have singularities at the zeros of $F$, 
and may have jumps to compensate the potential discontinuities of $F$.) We 
show that this strategy yields that \emph{there is no $\phi_0$ that could 
yield a square integrable $\phi$ which would, in addition, be differentiable 
everywhere on ${\cal S}$ except possibly at the points $\xi_\pm$, $\bar\xi
_\pm$}. 

First we show that $(\xi,\bar\xi)\mapsto z$, given by (\ref{eq:3.3.17}), is 
a $\mathbb{C}\to\mathbb{C}$ \emph{surjective} map, and hence the domain of 
$\phi_0$ should be the entire complex $z$-plane except perhaps curves or 
finitely many isolated points. By (\ref{eq:3.3.17}) the value $z=1$ is the 
image of the whole real axis $\bar\xi=\xi$. Hence we should show only that 
any $z\not=1$ is the image of some $\xi$. Using $\xi_+\xi_-=-1$ and the 
notation $\xi=:r\exp({\rm i}\chi)$, (\ref{eq:3.3.17}) gives 
\begin{eqnarray*}
&{}&r^2\bigl(z-1\bigr)-r\exp({\rm i}\chi)\bigl(z\xi_--\xi_+\bigr)-
  r\exp(-{\rm i}\chi)\bigl(z\xi_+-\xi_-\bigr)-\bigl(z-1\bigr)=0, \\
&{}&r^2\bigl(\bar z-1\bigr)-r\exp(-{\rm i}\chi)\bigl(\bar z\bar\xi_--\bar
  \xi_+\bigr)-r\exp({\rm i}\chi)\bigl(\bar z\bar\xi_+-\bar\xi_-\bigr)-
  \bigl(\bar z-1\bigr)=0.
\end{eqnarray*}
Using $z\not=1$, these yield that 
\begin{eqnarray*}
&{}&\exp(2{\rm i}\chi)=\frac{(z-1)(\bar z\bar\xi_--\bar\xi_+)-(\bar z-1)
 (z\xi_+-\xi_-)}{(\bar z-1)(z\xi_--\xi_+)-(z-1)(\bar z\bar\xi_+-\bar\xi_-)},\\
&{}&r^2-\frac{1}{2}r\Bigl(\exp({\rm i}\chi)\bigl(\frac{z\xi_--\xi_+}{z-1}-
 \frac{\bar z\bar\xi_+-\bar\xi_-}{\bar z-1}\bigr)+\exp(-{\rm i}\chi)\bigl(
 \frac{z\xi_+-\xi_-}{z-1}-\frac{\bar z\bar\xi_--\bar\xi_+}{\bar z-1}\bigr)
 \Bigr)-1=0.
\end{eqnarray*}
The first is an explicit expression of the phase of $\xi$ in terms of $z$ 
and the constants $\xi_\pm$ without any further restriction on them; and the 
second has a unique (positive) solution since its discriminant is positive. 
Hence, any $z\in\mathbb{C}$ has some pre-image under the map (\ref{eq:3.3.17}). 
Therefore, $\phi_0$ should be defined on the whole complex plane (except
possibly along lines and at isolated points determined by $F$) and depends 
only on $z$, but \emph{not} on $\bar z$. Note also that $z$ maps the points 
$\xi_-,\bar\xi_+$ of the complex $\xi$-plane into the zero of the complex 
$z$-plane, it maps the whole real axis $\bar\xi=\xi$ into $1$, and the points 
$\xi_+,\bar\xi_-$ to infinity. Also, the point $z=1$ corresponds to the 
infinity $\xi=\infty$ of the $\xi$-plane. (N.B.: The map $(\xi,\bar\xi)
\mapsto z$ is the product of one fraction linear transformation and the 
complex conjugate of another one.) 

By (\ref{eq:3.2.14}) it is easy to see that, in the generic case, $\xi_+$, 
$\bar\xi_+$, $\xi_-$ and $\bar\xi_-$ are different points of the complex 
$\xi$-plane and none of them is zero. Hence, if $\xi_0$ denotes any of these 
points, then there is a positive number, $R>0$, such that $U_R(\xi_0):=
\{\xi\in\mathbb{C}\vert\,\vert\xi-\xi_0\vert<R\}$ is an open neighbourhood 
of $\xi_0$, which does not contain any of the others. Since ${\cal S}$ is 
compact, the square integrability of $\phi$ is equivalent to its local 
square integrability. Since $F$ is finite on the real axis where $z=1$, 
moreover $z\to1$ if $\xi\to\infty$, in the $\xi\to\infty$ limit $\phi_0(z
(\xi,\bar\xi))\to\phi_0(1)$. Thus, $\phi_0$ remains bounded in the $\xi\to
\infty$ limit, too. Therefore, by (\ref{eq:3.3.22}) the square integrability 
of $\phi$ is equivalent to the square integrability of $\phi_0(\xi-\xi_\pm)
^p$ and of $\phi_0(\bar\xi-\xi_\pm)^p$ on $U_R(\xi_\pm)$, where $p$ is the 
corresponding (in general complex) power in $F$. Nevertheless, although 
$(\xi,\bar\xi)\mapsto\phi_0(z(\xi,\bar\xi))$ is smooth even on the real axis 
of the complex $\xi$-plane, $z\mapsto\phi_0(z)$ is \emph{not} necessarily 
differentiable at $z=1$, because the map $(\xi,\bar\xi)\mapsto z(\xi,\bar
\xi)$ shrinks the whole real axis and the point at infinity of the complex 
$\xi$-plane to the single point $1$ of the $z$-plane. 

Since $\xi_+$, $\bar\xi_+$, $\xi_-$ and $\bar\xi_-$ are all different points 
and none of them is zero, by (\ref{eq:3.3.17}) $z$ depends in the limit $\xi
\to\xi_\pm$ essentially only on $\xi-\xi_\pm$, and hence in this limit $\phi
_0=\phi_0(z(\xi,\bar\xi))$ depends essentially only on $\xi-\xi_\pm$. In a 
similar way, in the limit $\bar\xi\to\xi_\pm$, the function $\phi_0(z(\xi,
\bar\xi))$ depends essentially only on $\bar\xi-\xi_\pm$. Thus first let us 
suppose that $\phi_0\simeq\alpha(\xi-\xi_\pm)^q$ in the limit $\xi\to\xi_\pm$ 
for some $q,\alpha\in\mathbb{C}$. If we write $\xi-\xi_\pm=r\exp({\rm i}
\chi)$ and $p=:p_1+{\rm i}p_2$ and $q=:q_1+{\rm i}q_2$ with $p_1,p_2,q_1,
q_2\in\mathbb{R}$, then 
\begin{equation}
\phi_0(\xi-\xi_\pm)^p\simeq\alpha(\xi-\xi_\pm)^{p+q}=\alpha r^{p_1+q_1}\exp\bigl(
-(p_2+q_2)\chi\bigr)\exp\Bigl({\rm i}\bigl((p_1+q_1)\chi+(p_2+q_2)\ln r\bigr)
\Bigr). \label{eq:3.3.23}
\end{equation}
Hence, on the neighbourhood $U_R(\xi_\pm)$, one has $\vert\phi_0(\xi-\xi_\pm)^p
\vert^2\simeq\vert\alpha\vert^2r^{2(p_1+q_1)}\exp(-2(p_2+q_2)\chi)$. Then the 
$\epsilon\to0$ limit of 
\begin{equation*}
\int^R_\epsilon\int^{2\pi}_0\vert\phi_0(\xi-\xi_\pm)^p\vert^2 r\,d\chi\,dr=
\frac{1-\exp\bigl(-4\pi(p_2+q_2)\bigl)}{2(p_2+q_2)}\frac{\vert\alpha\vert^2}
{2(p_1+q_1+1)}\bigl[r^{2(p_1+q_1+1)}\bigr]^R_\epsilon
\end{equation*}
is finite precisely when $q_1+p_1+1>0$. Since for sufficiently small $R$ the 
deviation of the metric area element (\ref{eq:A.1.4}) of ${\cal S}$ from 
the `coordinate area element' $r\,d\chi dr$ on $U_R(\xi_\pm)$ can be neglected, 
we obtain that $\phi_0(\xi-\xi_\pm)^p$ is square integrable on $U_R(\xi_\pm)$ 
precisely when $q_1>-p_1-1$. In a similar way, in a neighbourhood of $\bar
\xi_\pm$, i.e. in the $\bar\xi\to\xi_\pm$ limit, if $\phi_0\simeq\alpha
(\bar\xi-\xi_\pm)^q$, then for the square integrability of $\phi_0(\bar\xi-
\xi_\pm)^p$ on $U_R(\xi_\pm)$ we obtain the same condition: $q_1>-p_1-1$.

In particular, we have the following 
\begin{description}
\item[A.]
in the $\xi\to\xi_+$ limit if $\phi_0\simeq\alpha(\xi-\xi_+)^a$ with $a=:a_1+
{\rm i}a_2$, $a_1,a_2\in\mathbb{R}$, then the condition $q_1>-p_1-1=-
(M_1+s+1)/2$ above shows that there are two possibilities: i. if $M_1+s\leq
-1$, then, in the $\xi\to\xi_+$ limit, $\phi_0$ \emph{must} tend to zero as
$\alpha(\xi-\xi_+)^{a_1}$, where $a_1>-(M_1+s+1)/2\geq0$; or ii. if 
$M_1+s>-1$, then in this limit $\phi_0$ might in principle diverge, but there 
is an \emph{upper bound} on the rate of its divergence: $\phi_0\simeq\alpha
(\xi-\xi_+)^{a_1}$, where now $-a_1<(M_1+s+1)/2$. 

\item[B.]
in the $\xi\to\bar\xi_+$ limit if $\phi_0\simeq\alpha(\xi-\bar\xi_+)^b$ with 
$b=:b_1+{\rm i}b_2$, $b_1,b_2\in\mathbb{R}$, then either i. $M_1-s\leq-1$, 
and then in this limit $\phi_0$ \emph{must} tend to zero as $\alpha(\xi-\bar
\xi_+)^{b_1}$, where $b_1>-(M_1-s+1)/2\geq0$; or ii. $M_1-s>-1$, when 
$\phi_0$ might in principle diverge as $\alpha(\xi-\bar\xi_+)^{b_1}$, but 
there is an upper bound on the rate of its divergence, which is $-b_1<
(M_1-s+1)/2$. 

\item[C.]
in the $\xi\to\xi_-$ limit if $\phi_0\simeq\alpha(\xi-\xi_-)^c$ with $c=:c_1
+{\rm i}c_2$, $c_1,c_2\in\mathbb{R}$, then either i. $M_1-s\geq1$, in which 
case $\phi_0$ \emph{must} tend to zero as $\alpha(\xi-\xi_-)^{c_1}$, where 
$c_1>(M_1-s-1)/2\geq0$; or ii. $M_1-s<1$, in which case $\phi_0$ might 
diverge, but the upper bound on the rate of its divergence is $-c_1<
(-M_1+s+1)/2$. 

\item[D.]
in the $\xi\to\bar\xi_-$ limit if $\phi_0\simeq\alpha(\xi-\bar\xi_-)^d$ with 
$d=:d_1+{\rm i}d_2$, $d_1,d_2\in\mathbb{R}$, then either i. $M_1+s\geq1$, in 
which case $\phi_0$ \emph{must} tend to zero as $\alpha(\xi-\bar\xi_-)^{d_1}$ 
with $d_1>(M_1+s-1)/2\geq0$; or ii. $M_1+s<1$, when $\phi_0$ might 
diverge as $\alpha(\xi-\bar\xi_-)^{d_1}$, but there is an upper bound on the 
rate of its divergence: $-d_1<(-M_1-s+1)/2$. 
\end{description}
However, since $\phi_0$ depends on $\xi$ and $\bar\xi$ \emph{only through} 
$z$, and moreover by (\ref{eq:3.3.17}) $z\to0$ if $\xi\to\bar\xi_+$ or $\xi\to
\xi_-$ and $z\to\infty$ if $\xi\to\xi_+$ or $\xi\to\bar\xi_-$, the asymptotic 
behaviour of $\phi_0$ in the limits $\xi\to\bar\xi_+,\xi_-$ is the same, 
and in the limits $\xi\to\xi_+,\bar\xi_-$ also. Hence $a=d$ and $b=c$ 
must hold, and, in particular, 
\begin{equation}
\lim_{\xi\to\xi_+}\phi_0\bigl(z(\xi,\bar\xi)\bigr)=\lim_{\xi\to\bar\xi_-}\phi_0
  \bigl(z(\xi,\bar\xi)\bigr)  , \hskip 20pt
\lim_{\xi\to\bar\xi_+}\phi_0\bigl(z(\xi,\bar\xi)\bigr)=\lim_{\xi\to\xi_-}\phi_0
  \bigl(z(\xi,\bar\xi)\bigr). \label{eq:3.3.24}
\end{equation}
These constraints must be taken into account when we determine the 
asymptotic form of $\phi_0$ from the above results in cases 
${\bf A}$.--${\bf D}$. 

Next, let us evaluate the consequences of the requirement that the wave 
function $\phi$ be well defined on the whole ${\cal S}$ except only at the 
points $\xi_\pm,\bar\xi_\pm$. This means that $\phi$ must be periodic on 
small closed complex paths even if they surround (but do not cross) a 
singularity. On the path $\xi=\xi_\pm+r\exp({\rm i}\chi)$, $\chi\in[0,2\pi)$, 
for sufficiently small $r$ the wave function $\phi$ is periodic in $\chi$ 
with period $2\pi$ precisely when $\phi_0(\xi-\xi_\pm)^p$ is periodic. By
(\ref{eq:3.3.23}) this is equivalent to $q_2=-p_2$ and $q_1+p_1=n\in
\mathbb{Z}$. However, by the condition $q_1>-p_1-1$ of the local square 
integrability, the second condition is, in fact, $p_1+q_1=n=0,1,2,...$. The 
analogous argumentation in the case of paths surrounding the singularities 
$\bar\xi_\pm$ gives the same result. Therefore, in particular cases 
${\bf A}$.--${\bf D}$, we have that 
\begin{eqnarray*}
&{}&a_1=-\frac{1}{2}\bigl(M_1+s+1\bigr)+n_1, \hskip 20pt 
  a_2=-\frac{1}{2}M_2, \\
&{}&b_1=-\frac{1}{2}\bigl(M_1-s+1\bigr)+n_2, \hskip 20pt 
  b_2=-\frac{1}{2}M_2, \\
&{}&c_1=\frac{1}{2}\bigl(M_1-s+1\bigr)+n_3, \hskip 27pt 
  c_2=\frac{1}{2}M_2, \\
&{}&d_1=\frac{1}{2}\bigl(M_1+s+1\bigr)+n_4, \hskip 27pt 
  d_2=\frac{1}{2}M_2,
\end{eqnarray*}
where $n_1,n_2,n_3,n_4=0,1,2,...$. But by $a=d$ and $b=c$ these immediately 
imply that both $M$ and the powers $a,b,c,d$ must be real; that 
\begin{equation}
a=\frac{1}{2}\bigl(1+n_1+n_4\bigr)\geq\frac{1}{2}, \hskip 20pt
b=\frac{1}{2}\bigl(1+n_2+n_3\bigr)\geq\frac{1}{2}; \label{eq:3.3.25}
\end{equation}
and that $2M=n_1+n_2-n_3-n_4$ and $2s=n_1-n_2+n_3-n_4$ hold. Thus, both 
$M\pm s$ are integer. Therefore, $\phi_0(z)$ must tend to zero as $1/z^a$ 
if $z\to\infty$, and also to zero as $z^b$ if $z\to0$. Hence, in particular, 
$\phi_0$ \emph{must be bounded on the complex $z$-plane}. 

Now, let us suppose that both $M\pm s$ are odd: $M+s=2k_1+1$ and $M-s=2k_2
+1$ for some $k_1,k_2\in\mathbb{Z}$. Then, apart from the isolated points 
$\xi_\pm,\bar\xi_\pm$, the function $F$ given by (\ref{eq:3.3.22}) is smooth 
on ${\cal S}$. Hence, to ensure the smoothness of $\phi$ on ${\cal S}-\{\xi
_\pm,\bar\xi_\pm\}$, too, the function $\phi_0$ must also be well defined and 
differentiable on the whole complex $z$-plane except possibly at $z=0,1$. 
But at the same time $\phi_0(z)$ must tend to zero both when $z\to0$ and 
when $z\to\infty$, i.e. it is \emph{bounded}. But then the next Lemma 
forces $\phi_0$ to be identically zero:
\begin{lemma*}
Let $\phi_0:\mathbb{C}-\{0\}\to\mathbb{C}$ be continuous and differentiable
on $\mathbb{C}-\{0,1\}$. If $\phi_0(z)$ tends to zero as $1/z^a$ in the 
$z\to\infty$ limit for some $a>0$, and if it tends to zero as $z^b$ in the 
$z\to0$ limit for some $b>0$, then $\phi_0$ is identically zero. 
\end{lemma*}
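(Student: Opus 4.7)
The plan is to treat this as a direct application of two classical results of complex analysis: Riemann's removable singularity theorem and Liouville's theorem. Under the standing interpretation that ``differentiable'' means complex differentiable (i.e.\ holomorphic), $\phi_0$ is holomorphic on $\mathbb{C}\setminus\{0,1\}$, and the prescribed decay at $0$ and at $\infty$ will force $\phi_0\equiv 0$.

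First I would remove the isolated singularity at $z=1$. By hypothesis $\phi_0$ is continuous at $z=1$, so in particular it is bounded on some punctured neighbourhood of $1$. Riemann's removable singularity theorem then extends $\phi_0$ to a function holomorphic on the whole of $\mathbb{C}\setminus\{0\}$.

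Next I would handle the singularity at $z=0$. The hypothesis that $\phi_0(z)$ tends to zero like $z^b$ with $b>0$ shows that $\phi_0$ is bounded in a punctured neighbourhood of the origin; a second application of the removable singularity theorem extends $\phi_0$ to an entire function $\tilde\phi_0$ on $\mathbb{C}$, with $\tilde\phi_0(0)=0$. Finally, the assumption that $\phi_0(z)$ tends to zero like $1/z^a$ as $|z|\to\infty$ with $a>0$ shows that $|\tilde\phi_0|$ is bounded outside some large disk; combined with continuity on the closed disk, $\tilde\phi_0$ is bounded on all of $\mathbb{C}$. Liouville's theorem then yields $\tilde\phi_0\equiv c$ for some constant $c$, and $\tilde\phi_0(0)=0$ forces $c=0$, whence $\phi_0\equiv 0$.

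The main ``obstacle'' is conceptual rather than technical: one must recognize that $\phi_0$, although introduced as a general solution of the first-order PDE $Y(\phi_0)=0$, is actually constrained to be \emph{holomorphic} in $z$ (the complex vector field $Y$ annihilates $z$ but not $\bar z$, so by the chain rule $\partial_{\bar z}\phi_0=0$), and therefore that the hypothesis ``differentiable'' may be read in the complex-analytic sense. Once this identification is made the argument is just the two-step removable-singularity reduction followed by Liouville; no serious estimation is required, and the decay rates $a,b>0$ are used only to guarantee boundedness near $0$ and $\infty$, with the specific exponents playing no further role.
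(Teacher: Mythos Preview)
Your argument is correct, and it is cleaner than the paper's own proof. Both arguments rest on the same implicit reading of ``differentiable'' as \emph{complex} differentiable (which the context of the paper indeed forces: $\phi_0$ is explicitly said to depend on $z$ but not on $\bar z$, and the paper itself invokes the maximum modulus principle, which only makes sense for holomorphic functions). Given that, your two applications of Riemann's removable singularity theorem followed by Liouville are the natural route.

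The paper instead multiplies $\phi_0$ by $\sqrt[n]{z-1}$ with $1/n<a$, so that the product vanishes at $z=0$, $z=1$, and $z=\infty$; it then passes to the $n$-sheeted Riemann surface ${\cal R}(n)$ associated with $\sqrt[n]{z-1}$, removes small neighbourhoods of $0$, $1$, $\infty$, and applies the maximum modulus principle on the resulting compact set to conclude that $\vert\sqrt[n]{z-1}\,\phi_0\vert\leq\epsilon$ everywhere for arbitrary $\epsilon>0$. This works, but the Riemann-surface detour is unnecessary once one observes (as you do) that continuity at $z=1$ already makes the singularity there removable. The advantage of the paper's approach is perhaps that it generalizes more readily to the later situations in the same subsection, where $\phi_0$ is genuinely multivalued and one must pass to a branched cover anyway; your approach is the right one for the lemma as stated.
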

\begin{proof}
Let $n\in\mathbb{N}$ such that $1/n<a$. Then $\sqrt[n]{z-1}\phi_0$ tends to 
zero both in the $z\to0$ and in the $z\to\infty$ limits, and it is zero at 
$z=1$. Let ${\cal R}(n)$ be the Riemann surface associated naturally with 
the $\sqrt[n]{z-1}$ function: it is the union of $n$ copy of the complex 
$z$-plane which has been cut along the $z\geq1$ portion of the real axis, 
and in which the ${\rm Im}(z)>0$ (upper) side edge of the cut in the $k$th 
copy has been identified with the ${\rm Im}(z)<0$ (lower) side edge of the
cut in the $(k+1)$th copy for any $k\in\mathbb{N}$ modulo $n$. Then 
$\sqrt[n]{z-1}\phi_0$ extends to ${\cal R}(n)-\{0,1\}$ to be a 
differentiable function, which tends to zero in the $z\to0$, $z\to1$ and 
$z\to\infty$ limits. Thus, for any $\epsilon>0$, there exists a positive 
number, $r>0$, such that on the neighbourhood $U_r(0):=\{\xi\in\mathbb{C}
\vert\,\vert\xi\vert<r\}$ and $U_r(1):=\{\xi\in\mathbb{C}\vert\,\vert\xi-1
\vert<r\}$ of $0$ and $1$, respectively, the modulus $\vert\sqrt[n]{z-1}
\phi_0\vert$ is smaller than $\epsilon$. Also, there is a positive number, 
$R>0$, such that this modulus is smaller than $\epsilon$ on $U_R(\infty):=
\{\xi\in\mathbb{C}\vert\,\vert\xi\vert>R\}$. These neighbourhoods determine 
open neighbourhoods of $0$, $1$ and $\infty$, respectively, in ${\cal R}
(n)$, too, such that $\vert\sqrt[n]{z-1}\phi_0\vert<\epsilon$ on these
neighbourhoods. Thus, $\max\{\vert\sqrt[n]{z-1}\phi_0(z)\vert\,\vert\,\, z
\in\partial U_r(0)\cup\partial U_r(1)\cup\partial U_R(\infty)\}\leq\epsilon$. 
But $K:={\cal R}(n)-U_r(0)-U_r(1)-U_R(\infty)$ is compact and $\partial 
U_r(0)\cup\partial U_r(1)\cup\partial U_R(\infty)=\partial K$, and hence, by 
the maximum modulus principle (see e.g. \cite{Sc}), $\vert\sqrt[n]{z-1}\phi
_0(z)\vert\leq\epsilon$ for any $z\in K$, too. Therefore, $\vert\sqrt[n]{z-1}
\phi_0\vert$ must be smaller than an arbitrarily chosen small $\epsilon>0$ on 
the whole ${\cal R}(n)$, implying that $\phi_0$ is identically vanishing. 
\end{proof}
\noindent
Thus, in this case, there are no most classical states. 

Next suppose that both $M\pm s$ are even: $M+s=2k_1$ and $M-s=2k_2$. Then 
by (\ref{eq:3.3.22}) 
\begin{equation*}
F=\frac{1}{\sqrt{(\xi-\xi_+)(\bar\xi-\xi_+)(\xi-\xi_-)(\bar\xi-\xi_-)}}
\bigl(\xi-\xi_+\bigr)^{k_1}\bigl(\bar\xi-\xi_+\bigr)^{k_2}\bigl(\xi-\xi_-\bigr)
^{-k_2}\bigl(\bar\xi-\xi_-\bigr)^{-k_1}\bigl(1+\xi\bar\xi\bigr).
\end{equation*}
The first factor has jumps on straight lines parallel with the positive real 
axis that start, respectively, at $\xi_\pm,\bar\xi_\pm$. Hence, to be able to 
obtain a wave function $\phi$ that is smooth on ${\cal S}-\{\xi_\pm,\bar\xi
_\pm\}$, there should exist a smooth function $\Phi$ on ${\cal S}-\{\xi_\pm,
\bar\xi_\pm\}$ satisfying 
\begin{equation}
\Phi\sqrt{(\xi-\xi_+)(\bar\xi-\xi_+)(\xi-\xi_-)(\bar\xi-\xi_-)}=\phi_0.
\label{eq:3.3.26}
\end{equation}
By the definition (\ref{eq:3.3.17}) of $z$, this condition can be rewritten 
in the form 
\begin{equation*}
\Phi\bigl(\xi-\xi_-\bigr)\bigl(\bar\xi-\xi_+\bigr)=\sqrt{z}\phi_0=:\phi_1.
\end{equation*}
Since the expression on the left is smooth on ${\cal S}-\{\xi_\pm,\bar
\xi_\pm\}$ and the map $(\xi,\bar\xi)\mapsto z$ is surjective and smooth on 
the $\xi$-plane except at $\xi_-$ and $\bar\xi_+$, $\phi_1$ is a well 
defined differentiable function on the whole complex $z$-plane except 
possibly at $z=0,1$. By (\ref{eq:3.3.25}) $\phi_1$ is still bounded: in the 
$z\to0$ limit it tends to zero as $z^b$ with $b\geq3/2$, and although in the 
$z\to\infty$ limit it does not necessarily tend to zero, but by 
(\ref{eq:3.3.25}) it is still bounded. 

However, (\ref{eq:3.3.26}) can be rewritten in the other equivalent form 
\begin{equation*}
\Phi\bigl(\xi-\xi_+\bigr)\bigl(\bar\xi-\xi_-\bigr)=\frac{\phi_0}{\sqrt{z}}
=:\phi_2,
\end{equation*}
and the previous analysis can be repeated. We obtain that $\phi_2(z)$ tends 
to zero in the $z\to\infty$ limit as $1/z^a$ with $a\geq3/2$, while it 
remains bounded in the $z\to0$ limit. Then, however, $\phi_1\phi_2=(\phi_0)
^2$ satisfies the conditions of the Lemma above, and hence $\phi_0$ must be 
identically zero, which is a contradiction. 

Next let us suppose that $M+s=2k_1$ and $M-s=2k_2+1$. Then by 
(\ref{eq:3.3.22})
\begin{equation*}
F=\frac{1}{\sqrt{(\xi-\xi_+)(\bar\xi-\xi_-)}}\bigl(\xi-\xi_+\bigr)^{k_1}
\bigl(\bar\xi-\xi_+\bigr)^{k_2}\bigl(\xi-\xi_-\bigr)^{-k_2-1}\bigl(\bar\xi-
\xi_-\bigr)^{-k_1}\bigl(1+\xi\bar\xi\bigr).
\end{equation*}
Apart from the first factor, this is smooth and non-zero on ${\cal S}-\{
\xi_\pm,\bar\xi_\pm\}$, but the first factor is not only singular at isolated 
points, but also discontinuous on certain lines in the $\xi$-plane. Hence, 
to obtain a wave function $\phi$ that is smooth on ${\cal S}-\{\xi_\pm,\bar
\xi_\pm\}$, there should exist a function $\Phi$ which is smooth there and 
satisfies 
\begin{equation}
\Phi\sqrt{(\xi-\xi_+)(\bar\xi-\xi_-)}=\phi_0. \label{eq:3.3.27}
\end{equation}
We show that if (\ref{eq:3.3.27}) were satisfied, then there would exist a 
Riemann surface $\tilde{\cal R}$ and a complex analytic extension $\tilde
\phi_0$ of $\phi_0$ from the complex $z$-plane to $\tilde{\cal R}$ which 
would take its maximum at some point of $\tilde{\cal R}$, in contrast to the 
statement of the maximum modulus principle of complex analysis (see e.g.
\cite{Sc}). 

The factor $\sqrt{\xi-\xi_+}$ on the left of (\ref{eq:3.3.27}) has a jump 
along the straight line $\gamma_1(x):=x+{\rm i}y_+$, $x\geq x_+$, where $x_++
{\rm i}y_+:=\xi_+$ defines the real and imaginary parts of $\xi_+$. In fact, 
if $\xi_0=x_0+{\rm i}y_+$, $x_0>x_+$, is any point of $\gamma_1$ and $\xi\to
\xi^\pm_0$ denotes the limit at $\xi_0$ along any complex path through 
$\xi_0$ in the $y>y_+$/$y<y_+$ side of the line $\gamma_1$, then $\lim
_{\xi\to\xi^+_0}\sqrt{\xi-\xi_+}=(x_0-x_+)$ and $\lim_{\xi\to\xi^-_0}\sqrt{\xi-
\xi_+}=-(x_0-x_+)$. Since $\Phi$ is continuous and $\sqrt{\bar\xi-\xi_-}$ is 
continuous in a neighbourhood of $\gamma_1$, by (\ref{eq:3.3.27}) the 
function $\phi_0$ must be discontinuous along $\gamma_1$: 
\begin{equation*}
\lim_{\xi\to\xi^+_0}\phi_0\bigl(z(\xi,\bar\xi)\bigr)=-\lim_{\xi\to\xi^-_0}\phi_0
\bigl(z(\xi,\bar\xi)\bigr).
\end{equation*}
Analogously, the factor $\sqrt{\bar\xi-\xi_-}$ in (\ref{eq:3.3.27}) has a 
similar discontinuity at any point $\xi_0=x_0-{\rm i}y_-$, $x_0>x_-$, of the 
straight line $\gamma_2(x):=x-{\rm i}y_-$, $x\geq x_-$. Thus $\phi_0$ has an 
analogous jump along the straight line $\gamma_2$, too. 

Taking the $\xi$ and $\bar\xi$ derivative of (\ref{eq:3.3.27}) we find, 
respectively, that 
\begin{eqnarray*}
&{}&\sqrt{\xi-\xi_+}\sqrt{\bar\xi-\xi_-}\Bigl(\frac{\partial\Phi}
  {\partial\xi}+\frac{1}{2}\frac{\Phi}{(\xi-\xi_+)(\bar\xi-\xi_-)}\Bigr)=
  \phi^\prime_0z\frac{\xi_--\xi_+}{(\xi-\xi_+)(\xi-\xi_-)}, \\
&{}&\sqrt{\xi-\xi_+}\sqrt{\bar\xi-\xi_-}\Bigl(\frac{\partial\Phi}{\partial
  \bar\xi}+\frac{1}{2}\frac{\Phi}{(\xi-\xi_+)(\bar\xi-\xi_-)}\Bigr)=
  \phi^\prime_0z\frac{\xi_+-\xi_-}{(\bar\xi-\xi_+)(\bar\xi-\xi_-)},
\end{eqnarray*}
where $\phi^\prime_0$ denotes the derivative of $\phi_0$ with respect to $z$. 
The coefficients of $\sqrt{\xi-\xi_+}$ on the left of these equations are 
continuous along the straight line $\gamma_1$ (except at $\xi_+$), and these 
left hand sides have well defined limits from both the $y>y_+$ and the 
$y<y_+$ sides of $\gamma_1$. Since, on the right of these equations, the 
coefficients of $\phi^\prime_0$ are continuous (except at $\xi_+$), this 
implies, first, that $\phi^\prime_0$ has a well defined limit at the image 
under $z$ of any point of $\gamma_1$ from both sides of $\gamma_1$, and, 
second, that these limits are $-1$-times of each other. Taking higher 
derivatives of (\ref{eq:3.3.27}), an analogous argumentation gives that any 
$n$th order derivative $\phi^{(n)}_0$ of $\phi_0$ behaves in the same way: 
\begin{equation}
\lim_{\xi\to\xi^+_0}\phi^{(n)}_0\bigl(z(\xi,\bar\xi)\bigr)=-\lim_{\xi\to\xi^-_0}
\phi^{(n)}_0\bigl(z(\xi,\bar\xi)\bigr). \label{eq:3.3.28}
\end{equation}
$\phi^{(n)}_0$ has an analogous jump along the straight line $\gamma_2$, too. 

Let ${\cal R}_1$ denote the complex $\xi$-plane which is cut along the 
straight lines $\gamma_1$ and $\gamma_2$, and let ${\cal R}_2$ denote another 
copy of ${\cal R}_1$. Then we form the Riemann surface ${\cal R}$ by 
identifying the $y>y_+$ side edge along $\gamma_1$ in ${\cal R}_1$ with the 
$y<y_+$ side edge along $\gamma_1$ in ${\cal R}_2$; and the $y<y_+$ side edge 
along $\gamma_1$ in ${\cal R}_1$ with the $y>y_+$ side edge along $\gamma_1$ 
in ${\cal R}_2$. We make the analogous identification of the edges along the 
other straight line $\gamma_2$, too. $\xi_+$ and $\bar\xi_-$ will be the two 
branch points in ${\cal R}$. Then we define $\Phi$ on the second copy 
${\cal R}_2$ to be equal to $\Phi$ on ${\cal R}_1$, and $\phi_0$ (or, more 
precisely, $\phi_0\circ z$) on ${\cal R}_2$ to be $-1$ times of $\phi_0$ on 
${\cal R}_1$. With these definitions we have extended $\Phi$ and $\phi_0$ 
from ${\cal R}_1$ to ${\cal R}$, and by (\ref{eq:3.3.28}) these are smooth 
and (\ref{eq:3.3.27}) still holds on ${\cal R}$. 

Finally, let $\tilde{\cal R}_1$ and $\tilde{\cal R}_2$ be two copies of the 
complex $z$-plane which have been cut along the image of the straight lines 
$\gamma_1$ and $\gamma_2$, and then construct the Riemann surface $\tilde
{\cal R}$ from these two in the way analogous to how ${\cal R}$ was 
constructed. This $\tilde{\cal R}$ can be considered as the image of ${\cal 
R}$ under the map $z$ in a wider sense. In fact, under the action of the map
$z$, the branch points $\xi_+$ and $\bar\xi_-$ of ${\cal R}$ are pushed out 
to infinity of $\tilde{\cal R}$, and the branch point $1$ of $\tilde{\cal R}$ 
corresponds to the whole real axis and the infinity of ${\cal R}$. Thus, 
${\cal R}$ and $\tilde{\cal R}$ are \emph{not} homeomorphic to each other. 
In particular, while ${\cal R}$ is connected, $\tilde{\cal R}$ consists of 
two disconnected components $\tilde{\cal R}^\prime$ and $\tilde{\cal R}
^{\prime\prime}$, each of them being homeomorphic to the cylinder $S^1\times
\mathbb{R}$, and which are `connected' to each other only at infinity. 

The function $\phi_0$ on ${\cal R}$ defines a continuous function $\tilde
\phi_0$ both on $\tilde{\cal R}^\prime-\{0\}$ and $\tilde{\cal R}^{\prime\prime}
-\{0\}$, and this $\tilde\phi_0$ is differentiable on $\tilde{\cal R}^\prime-
\{0,1\}$ and $\tilde{\cal R}^{\prime\prime}-\{0,1\}$. Moreover, in the $z\to0$ 
limit $\tilde\phi_0(z)$ tends to zero as $z^b$ with $b\geq1/2$, and in the 
$z\to\infty$ limit it tends to zero as $1/z^a$ with $a\geq1/2$. Then, 
repeating the argumentation of the proof of the Lemma above, we conclude 
that $\tilde\phi_0$, and hence $\phi_0$ itself, must be identically vanishing. 

The previous analysis can be repeated if $M+s$ is odd and $M-s$ is even 
with the same conclusion, and hence there are no most classical states in 
this case either.

\bigskip
\noindent
$\bullet$ {\bf The non-existence of differentiable wave functions: The 
exceptional case}

\noindent
The logic of the proof of the non-existence of differentiable wave functions 
in the exceptional case is similar to that in the generic case. Thus we only 
sketch the key steps. 

First we show that $(\xi,\bar\xi)\mapsto u$ is a surjective $\mathbb{C}\to
\mathbb{C}$ map, and hence the domain of $\phi_0$ should be the entire 
complex $u$-plane except possibly curves or finitely many isolated points. 
By (\ref{eq:3.3.20}) $u:=u_1-u_2=0$ on the real axis $\bar\xi=\xi$ of the 
complex $\xi$-plane, and hence we should only show the existence of a 
pre-image of any $u\not=0$. However, with the notation $\xi=:r\exp({\rm i}
\chi)$ the definition of $u$ gives 
\begin{eqnarray*}
&{}&r^2+r\Bigl(\exp({\rm i}\chi)\bigl({\rm i}+\frac{1}{u}\bigr)+\exp(
  -{\rm i}\chi)\bigl({\rm i}-\frac{1}{u}\bigr)\Bigr)-1=0, \\
&{}&r^2+r\Bigl(\exp(-{\rm i}\chi)\bigl(-{\rm i}+\frac{1}{\bar u}\bigr)+\exp(
  {\rm i}\chi)\bigl(-{\rm i}-\frac{1}{\bar u}\bigr)\Bigr)-1=0;
\end{eqnarray*}
implying that 
\begin{eqnarray*}
&{}&\exp(2{\rm i}\chi)=\frac{u+\bar u-2{\rm i}u\bar u}{u+\bar u+2{\rm i}u
  \bar u}, \\
&{}&r^2+\frac{1}{2}r\Bigl(\exp({\rm i}\chi)\bigl(\frac{1}{u}-\frac{1}{\bar u}
  \bigr)+\exp(-{\rm i}\chi)\bigl(\frac{1}{\bar u}-\frac{1}{u}\bigr)\Bigr)
  -1=0.
\end{eqnarray*}
These show that $(\xi,\bar\xi)\mapsto u$ is, in fact, a surjective 
$\mathbb{C}\to\mathbb{C}$ map. This maps the points $\xi=\pm{\rm i}$ of the 
complex $\xi$-plane into infinity of the complex $u$-plane. 

If the (complex) eigenvalue $C$ in (\ref{eq:3.3.21}) were non-zero, then 
solution (\ref{eq:3.3.21}) would have an essential singularity at both 
$\xi={\rm i}$ and $-{\rm i}$, and with these singularities $\phi$ could not 
be locally square integrable. Therefore, $C$ must be vanishing and the local 
solution is 
\begin{equation}
\phi=\phi_0\bigl(1+\xi\bar\xi\bigr)\bigl(\xi+{\rm i}\bigr)^{s-1}\bigl(
\bar\xi+{\rm i}\bigr)^{-s-1}. \label{eq:3.3.29}
\end{equation}
The coefficient of $\phi_0$ in (\ref{eq:3.3.29}), 
\begin{equation*}
F:=\bigl(1+\xi\bar\xi\bigr)\bigl(\xi+{\rm i}\bigr)^{s-1}\bigl(\bar\xi+{\rm i}
\bigr)^{-s-1},
\end{equation*}
is bounded in the limit $\xi\to\infty$, and it is non-singular and non-zero 
everywhere except at the two points $\xi=\pm{\rm i}$. In the limit $\xi\to-
{\rm i}$ the function $\phi_0$ depends essentially only on $\xi+{\rm i}$, 
and hence, in this limit, we may write $\phi_0\simeq\alpha(\xi+{\rm i})^a$ 
for some $a=a_1+{\rm i}a_2\in\mathbb{C}$, $a_1,a_2\in\mathbb{R}$, and $\alpha
\in\mathbb{C}$. Similarly, in the limit $\xi\to{\rm i}$, we have that $\phi_0
\simeq\alpha(\bar\xi+{\rm i})^b$, $b=b_1+{\rm i}b_2$ for some $b_1,b_2\in
\mathbb{R}$. 

On a small enough open neighbourhood $U_R(-{\rm i})$ of $-{\rm i}$ we write 
$\xi=-{\rm i}+r\exp({\rm i}\chi)$, and, on this neighbourhood, 
\begin{equation}
\phi(\xi+{\rm i})^{s-1}\simeq\alpha(\xi+{\rm i})^{a+s-1}=\alpha r^{a_1+s-1}
\exp(-a_2\chi)\exp\Bigl({\rm i}\bigl((a_1+s-1)\chi+a_2\ln r\bigr)\Bigr).
\label{eq:3.3.30}
\end{equation}
Hence, repeating the argumentation that we had in the generic case, we find 
that the condition of the square integrability of $\phi$ yields that $a_1+s
>0$. Also, the square integrability of $\phi$ on the neighbourhood $U_R
({\rm i})$ of ${\rm i}$ yields $b_1-s>0$. 

Since $u\to\infty$ in both limits $\xi\to\pm{\rm i}$, the asymptotic form 
of $\phi_0$ in both these limits are the same. Hence, $a=b$, yielding that 
$a_1=b_1>0$, i.e. $\phi_0$ must tend to zero if $\xi\to\pm{\rm i}$, i.e. if 
$u\to\infty$. 

If, in addition, the wave function $\phi$ is required to be well defined on 
the whole ${\cal S}$ except only at $\xi=\pm{\rm i}$, then we find that $a$ 
and $b$ must be real and $a=-s+1+n_1$, $b=s+1+n_2$, where $n_1,n_2=0,1,2,...$. 
Then by the equality $a=b$ these yield that the two integers are not 
independent: $2s=n_1-n_2$. Therefore, to summarize, $\phi_0(u)$ must tend to 
zero in the $u\to\infty$ limit as $1/u^a$, where $a=1+\frac{1}{2}(n_1+n_2)$ 
and $n_1,n_2=0,1,2,...$ such that $2s=n_1-n_2$. In particular, \emph{it must 
be bounded on the complex $u$-plane}. 

Now let us suppose that $s$ is an integer. Then, apart from the points $\pm
{\rm i}$, the function $F$ is smooth on ${\cal S}$. Hence, if we want the 
wave function $\phi$ to be differentiable on ${\cal S}-\{\pm{\rm i}\}$, then 
$\phi_0$ have to be well defined on the $\xi$-plane minus the points 
$\pm{\rm i}$. Therefore, $\phi_0$ must be a differentiable bounded function 
on the \emph{entire} $u$-plane. Hence, by Liouville's theorem, it must be 
constant. But since $\phi_0$ must be vanishing at $u=\infty$, it would have 
to be identically zero. This is a contradiction. 

If $s$ is half odd integer, i.e. $s=S+1/2$ for some $S\in\mathbb{Z}$, 
then 
\begin{equation*}
F=\frac{1}{\sqrt{(\xi+{\rm i})(\bar\xi+{\rm i})}}\bigl(\xi+{\rm i}\bigr)^S
\bigl(\bar\xi+{\rm i}\bigr)^{-S-1}\bigl(1+\xi\bar\xi\bigr).
\end{equation*}
Again, as in the generic case, the existence of a wave function $\phi$ that 
is differentiable on ${\cal S}-\{\pm{\rm i}\}$ is equivalent to the existence 
of a differentiable function $\Phi$ on ${\cal S}-\{\pm{\rm i}\}$ satisfying 
\begin{equation*}
\Phi\sqrt{(\xi+{\rm i})(\bar\xi+{\rm i})}=\phi_0.
\end{equation*}
However, by the definition of $u$ we can rewrite this condition in the form 
\begin{equation*}
\Phi\sqrt{\bar\xi-\xi}=\sqrt{u}\phi_0.
\end{equation*}
On the upper half of the complex $\xi$-plane, where $y:={\rm Im}(\xi)>0$, 
we have that $\sqrt{\bar\xi-\xi}=-(1+{\rm i})\sqrt{y}$, while on the lower 
half plane that $\sqrt{\bar\xi-\xi}=(1+{\rm i})\sqrt{\vert y\vert}$. Hence, 
$\Phi\sqrt{\bar\xi-\xi}$ is continuous on the whole complex $\xi$-plane, 
and it is smooth everywhere outside the real axis $\bar\xi=\xi$. Since, 
however, the image of this real axis in the $u$-plane is $0$, we conclude 
that $\sqrt{u}\phi_0$ is differentiable on the $u$-plane except its origin. 
Clearly, $\sqrt{u}\phi_0$ is vanishing at $u=0$, and, in the $u\to\infty$
limit, it falls off as $(1/u)^{(1+n_1+n_2)/2}$. In particular, it is 
bounded on the whole $u$-plane. But then the Lemma above implies that 
$\sqrt{u}\phi_0$ is identically zero, yielding that $\phi_0$ itself is 
identically vanishing. This is a contradiction. 

Therefore, \emph{there are no differentiable, normalizable most classical
states neither in the generic nor in the exceptional case}. 


\section{Summary and conclusions}
\label{sec-4}

It is well known (see e.g. \cite{Kl,ZFG}) that in a Heisenberg system, 
generated by $\{{\bf q},{\bf p},{\bf I}\}$, (1) the most classical states 
always exist, and (2) they can be parameterized by the expectation values, 
$\langle{\bf q}\rangle_\phi$ and $\langle{\bf p}\rangle_\phi$, and by the 
quotient of the standard deviations, $\lambda=\Delta_\phi{\bf p}/\Delta_\phi
{\bf q}$. (3) There is no restriction on the expectation values, they can 
take any value; and hence (4) the set of the pairs $(\langle{\bf q}\rangle
_\phi,\langle{\bf p}\rangle_\phi)$ is in a one-to-one correspondence with the 
points of the phase space $T^*\mathbb{R}\simeq\mathbb{R}^2$ of the 
`corresponding' \emph{classical} system. (5) Although the quotient $\Delta
_\phi{\bf p}/\Delta_\phi{\bf q}$ can be any positive number, the product of 
the standard deviations is a \emph{universal constant}: $\Delta_\phi{\bf p}\,
\Delta_\phi{\bf q}=\hbar/2$. Thus, the most classical states are 
the \emph{minimal uncertainty states} (i.e. in which the function $F:\phi
\mapsto\Delta_\phi{\bf p}\Delta_\phi{\bf q}$ on the unit sphere in the Hilbert 
space of the states takes its \emph{minimum}), too. 

As a result of the present investigations, we found that $E(3)$ 
invariant elementary quantum mechanical systems provide pairs of observables 
such that, for the most classical states with respect to them, \emph{each of 
the above properties is failed to be satisfied} in some of these examples. 
In particular: 
(1) most classical states exist for the pair $({\bf p}(\alpha),{\bf J}
(\beta))$ only when the direction $\alpha^i$ is orthogonal to $\beta^i$, and 
for $({\bf p}(\alpha),{\bf C}(\beta))$ only when the angle between these 
directions is zero or an acute angle, but no such state (which could be 
represented by a \emph{differentiable} wave function) exists at all for 
$({\bf C}(\alpha),{\bf C}(\beta))$. 
(2) The most classical states, when they exist, depend not only on the 
expectation values (or the parameters fixing the expectation values), but 
also on an (almost) completely free functions, $A(\theta)$, $\phi_0(w)$ and 
$A(\varphi)$ in the $({\bf p}(\alpha),{\bf J}(\beta))$, $({\bf J}(\alpha),
{\bf J}(\beta))$ and $({\bf p}(\alpha),{\bf C}(\beta))$ cases, respectively, 
i.e. on \emph{infinitely many parameters}, too. 
(3) The expectation values can be considerably restricted: in the $({\bf p}
(\alpha),{\bf J}(\beta))$ and $({\bf p}(\alpha),{\bf C}(\beta))$ cases, 
respectively, the expectation value of the linear momentum must be zero, 
$\langle{\bf p}(\alpha)\rangle_\phi=0$, and restricted by $\vert\langle
{\bf p}(\alpha)\rangle_\phi\vert\leq P\alpha_i\beta^i$; hence 
(4) these values do not exhaust all their classically allowed values. 
(5) Although, as in the case of the Heisenberg system, the quotient of the 
standard deviations of the two observables (i.e. the range of $\lambda$) is 
not restricted at all, their product depends on $\lambda$ without a minimum 
at any finite value of $\lambda$. Thus, there are \emph{no} minimal 
uncertainty states when the most classical states exist. (This last property 
has already been realized in \cite{Aretal76} in the special case of the most 
classical states for the two angular momentum operators $({\bf J}_1,{\bf J}
_2)$ in the $su(2)$ algebra.) However, when no most classical states exist, 
then, in principle, minimal uncertainty states might exist, but these could 
not saturate the uncertainty relation: in these the \emph{strict inequality} 
would have to hold. 

These results show explicitly that the most classical states are associated 
\emph{only with pairs of observables}, rather than with the whole 
\emph{algebra} of its (basic) observables: for a certain choice of a pair 
of observables from this algebra the most classical states exist, but for 
a different choice they do not at all. 

It is known that, in simultaneous measurements of conjugate 
(non-commuting) observables of the Heisenberg system \cite{ArKe}, the final 
states after the \emph{most accurate} (or ideal) measurements are the most 
classical (pure), rather than more general mixed states \cite{SHe}. Hence, 
assuming that this link between the ideal simultaneous measurements and the 
most classical states is characteristic not only to Heisenberg systems, the 
lack of the most classical states for a pair of observables indicates that 
the final state of the system after the simultaneous measurement of these 
observables should \emph{necessarily} be \emph{mixed}. 


\section{Acknowledgments}

Thanks are due to the referee for his critical remarks and for 
several useful references. 

This research did not receive any specific grant from funding agencies in 
the public, commercial, or not-for-profit sectors. The author has no 
conflicts to disclose.

\appendix

\section{Appendix}
\label{sec-A}

In Appendix \ref{sub-A.1}, mostly to fix the notations, we summarize the 
necessary differential geometric background that we need both in the main 
part of the paper and in Appendix \ref{sub-A.2}. Then, in Appendix 
\ref{sub-A.2}, we present the unitary, irreducible representations of the 
Euclidean group $E(3)$ in the form that we use. As far as we can see, it is 
these representations that fit most naturally to the Euclidean group; and 
they are spanned \emph{precisely} by the spin weighted spherical harmonics. 
In deriving these, we use the analogous representations of $SU(2)$, given in 
Appendix A.1.4 of \cite{Sz1} (where a more detailed quantum mechanical 
interpretation of the geometric notions is also given). These representations 
are analogous in their spirit to those of the Poincar\'e group summarized in
\cite{StWi}.

\subsection{Complex coordinates and the edth operators on ${\cal S}$}
\label{sub-A.1}

Let $p^i$, $i=1,2,3$, denote Cartesian coordinates in $\mathbb{R}^3$ and 
${\cal S}:=\{p^i\in\mathbb{R}^3\vert P^2:=\delta_{ij}p^ip^j={\rm const}\}$, 
the metric 2-sphere of radius $P$. The complex stereographic coordinates, 
projected from the \emph{north} pole, are defined on $U_n:={\cal S}-\{
(0,0,P)\}$, the sphere minus its \emph{north} pole, by $\zeta:=\exp({\rm i}
\varphi)\cot(\theta/2)$, where $(\theta,\varphi)$ are the standard spherical 
polar coordinates. In terms of $(\zeta,\bar\zeta)$, the Cartesian coordinates 
of the point $p^i\in U_n$ are 
\begin{equation}
p^i=P\Bigl(\frac{\bar\zeta+\zeta}{1+\zeta\bar\zeta},{\rm i}\frac{\bar\zeta-
\zeta}{1+\zeta\bar\zeta},\frac{\zeta\bar\zeta-1}{1+\zeta\bar\zeta}\Bigr). 
\label{eq:A.1.2}
\end{equation}
The outward pointing unit normal to ${\cal S}$ at the point $p^i$ is $n^i:=
p^i/P$. This normal is completed to be a basis by the complex vector field 
\begin{equation}
m^i:=\frac{1}{\sqrt{2}}\Bigl(\frac{1-\zeta^2}{1+\zeta\bar\zeta},
  {\rm i}\frac{1+\zeta^2}{1+\zeta\bar\zeta},\frac{2\zeta}{1+\zeta\bar\zeta}
\Bigr) \label{eq:A.1.3}
\end{equation}
and its complex conjugate $\bar m^i$. These are orthogonal to $n^i$, null 
(i.e. $m^im_i=0$), normalized with respect to each other (i.e. $m^i\bar m_i
=1$), and $p^im^j\bar m^k\varepsilon_{ijk}={\rm i}P$ holds. (Recall that, in 
the present paper, the metric on $\mathbb{R}^3$ is chosen to be the 
\emph{positive definite} $\delta_{ij}$, rather than the negative definite 
spatial part of the Minkowski metric $\eta_{ab}:={\rm diag}(1,-1,-1,-1)$, 
where $a,b=0,i$.) The vector field $m^i$, as a differential operator, 
is given by 
\begin{equation}
m^i\bigl(\frac{\partial}{\partial p^i}\bigr)=\frac{1}{\sqrt{2}P}\bigl(
1+\zeta\bar\zeta\bigr)\Bigl(\frac{\partial}{\partial\bar\zeta}\Bigr). 
\label{eq:A.1.3m}
\end{equation}
Hence, $\zeta$ is a local \emph{anti-holomorphic} coordinate on $U_n$. Also 
in these coordinates, the line element of the metric and the corresponding 
area element on ${\cal S}$ of radius $P$, respectively, are 
\begin{equation}
dh^2=\frac{4P^2}{(1+\zeta\bar\zeta)^2}d\zeta d\bar\zeta=P^2\bigl(d\theta^2+
\sin^2\theta\,d\varphi^2\bigr), \hskip 10pt 
{\rm d}{\cal S}=\frac{-2{\rm i}P^2}{(1+\zeta\bar\zeta)^2}d\zeta\wedge d
\bar\zeta=P^2\sin\theta\,d\theta\wedge d\varphi. \label{eq:A.1.4}
\end{equation}
There are analogous constructions on $U_s:={\cal S}-\{(0,0,-P)\}$, on the 
2-sphere minus the \emph{south} pole, too; and the structures defined on 
$U_n$ are related to those introduced on $U_s$ smoothly on the overlap $U_n
\cap U_s$. 

Considering $\mathbb{R}^3$ to be the $p^0=P$ hyperplane of the Minkowski 
space $\mathbb{R}^{1,3}$ with the Cartesian coordinates $p^a=(p^0,p^i)$ 
and the flat metric $\eta_{ab}$, the 2-sphere ${\cal S}$ is just the 
intersection of the $p^0=P$ hyperplane with the null cone of the origin $p
^a=0$ in $\mathbb{R}^{1,3}$. Hence, for any $p^i\in{\cal S}$, there is a 
spinor $\pi^A$, the `spinor constituent' of $p^i$, such that $p^i=\sigma
^i_{AA'}\pi^A\bar\pi^{A'}$, and $\pi^A$ is unique only up to the phase 
ambiguity $\pi^A\mapsto\exp({\rm i}\gamma)\pi^A$, $\gamma\in[0,2\pi)$. 
(Here $\sigma^a_{AA'}$ are the standard $SL(2,\mathbb{C})$ Pauli matrices
(including the factor $1/\sqrt{2}$), according to the conventions of
\cite{PR}. Note, however, that we lower and raise the small Latin indices
by $\delta_{ij}$ and its inverse. The capital Latin indices $A,B,...=0,1$ 
are concrete spinor name indices, referring to some constant normalized 
spin frame in $\mathbb{R}^3$. The spinor name indices are raised and lowered 
by the standard symplectic metric $\varepsilon^{AB}$ and its inverse 
\cite{PR,HT}.) Since $P^2=\delta_{ij}p^ip^j=-\eta_{ij}\sigma^i_{AA'}\sigma^j
_{BB'}\pi^A\bar\pi^{A'}\pi^B\bar\pi^{B'}=-(\eta_{ab}\sigma^a_{AA'}\sigma^b_{BB'}-
\sigma^0_{AA'}\sigma^0_{BB'})\pi^A\bar\pi^{A'}\pi^B\bar\pi^{B'}=(\sigma^0_{AA'}
\pi^A\bar\pi^{A'})^2$, the norm of $\pi^A$ with respect to $\sqrt{2}\sigma
^0_{AA'}$ is $(\sqrt{2}P)^{1/2}$. However, $\pi^A$ as a spinor \emph{field} 
is well defined only on ${\cal S}$ \emph{minus one point} (see e.g. 
\cite{HT,EaTod}). Thus $\pi^A$ on $U_n$ and the analogous one on $U_s$ are 
only \emph{locally defined} `spinorial coordinates' on ${\cal S}$. 

The complex line bundles ${\cal O}(-2s)$ over ${\cal S}$ can be introduced 
using the totally symmetric N-type spinor fields of rank $2\vert s\vert$ on 
${\cal S}$: if $s=-\vert s\vert\leq0$ then these spinor fields are 
\emph{unprimed} and their principal spinor at the point $p^i=\sigma^i_{AA'}
\pi^A\bar\pi^{A'}$ is $\pi^A$; and if $s=\vert s\vert>0$ then the spinor 
fields are \emph{primed} and their principal spinor is $\bar\pi^{A'}$. 
(Recall that e.g. $\lambda^A$ is called a $2\vert s\vert$-fold principal 
spinor of the totally symmetric spinor $\phi^{A_1...A_{2\vert s\vert}}$ if $\phi
^{A_1...A_{2\vert s\vert}}\lambda_{A_1}=0$ holds, in which case $\phi
^{A_1...A_{2\vert s\vert}}$ necessarily has the form $\phi\lambda^{A_1}\cdots\lambda
^{A_{2\vert s\vert}}$ for some $\phi$. These spinors are called null or of type 
N, see e.g. \cite{PR,HT}.) Hence, e.g. on the domain $U_n$, these spinor 
fields have the form $\phi^{A_1...A_{2\vert s\vert}}=\phi\pi^{A_1}...\pi
^{A_{2\vert s\vert}}$ and $\chi^{A'_1...A'_{2\vert s\vert}}=\chi\bar\pi^{A'_1}...\bar\pi
^{A'_{2\vert s\vert}}$, where $\phi$ and $\chi$ are complex functions on $U_n$. 
Thus, the fibers of these bundles are one complex dimensional, and the 
line bundle ${\cal O}(-2s)$ is just the abstract bundle of these fibers 
over ${\cal S}$. $U_n$ and $U_s$ are local trivialization domains of 
${\cal O}(-2s)$, and the functions $\phi$ for $s=-\vert s\vert$ (and $\chi$ 
for $s=\vert s\vert$) are \emph{local} cross sections of ${\cal O}(-2s)$ on 
$U_n$. ${\cal O}(-2s)$ is globally trivializable precisely when $s=0$. 

The phase ambiguity $\pi^A\mapsto\exp({\rm i}\gamma)\pi^A$ in the principal 
spinor yields the ambiguity $\phi\mapsto\exp(-2{\rm i}\vert s\vert\gamma)
\phi$, where $\gamma$ is an arbitrary $[0,2\pi)$-valued locally defined 
function. The analogous ambiguity in the function $\chi$ is $\chi\mapsto
\exp(2{\rm i}s\gamma)\chi$. Therefore, despite this ambiguity, the
Hermitian scalar product of any two cross sections, representing 
e.g. $\phi^{A_1...A_{2\vert s\vert}}$ and $\psi^{A_1...A_{2\vert s\vert}}$ and given by 
\begin{equation}
\langle\phi^{A_1...A_{2\vert s\vert}},\psi^{A_1...A_{2\vert s\vert}}\rangle_s
:=\int_{\cal S}\bar\phi\psi{\rm d}{\cal S}, \label{eq:A.1.5}
\end{equation}
is well defined. The space of the square-integrable cross sections of 
${\cal O}(-2s)$ is a Hilbert space, and is also denoted by ${\cal H}_s$. One 
can show that this scalar product is just $(\sqrt{2}P)^{-2\vert s\vert}$ times 
the familiar, standard $L_2$-scalar product of the two spinor fields (see 
\cite{StWi}). In quantum mechanics (see Appendix \ref{sub-A.2}), these 
square integrable totally symmetric spinor fields, or, equivalently, the 
corresponding cross sections of the line bundle ${\cal O}(-2s)$, play the 
role of the wave functions, while their domain, ${\cal S}$, is analogous to 
the mass-shell in the momentum space. 

On $U_n$, we introduce the spinor field $o^A:=(\sqrt{2}P)^{-1/2}\pi^A$, 
which is completed by a spinor field $\iota^A$ on $U_n$ to be the 
Newman--Penrose spinor basis $\{o^A,\iota^A\}$ such that $o_A\iota^A=1$, 
$m^i\sigma^{AA'}_i=-o^A\bar\iota^{A'}$ and $\bar m^i\sigma^{AA'}_i=-\iota^A
\bar o^{A'}$. Then it is easy to see that $p^i\sigma^{AA'}_i=P(\iota^A\bar
\iota^{A'}-o^A\bar o^{A'})/\sqrt{2}$ also holds. Recalling that a scalar 
$\phi$ is said to have the spin weight $\frac{1}{2}(p-q)$ if under the 
rescaling $\{o^A,\iota^A\}\mapsto\{\lambda o^A,\lambda^{-1}\iota^A\}$, 
where $\lambda$ is any nowhere vanishing complex function on the domain 
of the spin frame, the scalar $\phi$ transforms as $\phi\mapsto\lambda^p
\bar\lambda^q\phi$ (see e.g. \cite{PR,HT}). Thus, ${\cal O}(-2s)$ is just 
the bundle of spin weighted scalars of weight $s$ on ${\cal S}$. In 
particular, the components of the vectors $m^i$ and $\bar m^i$ are of type 
$(1,-1)$ and $(-1,1)$, respectively, while those of $p^i$ are sums of a 
$(1,1)$ and a $(-1,-1)$ type scalar. Thus, the spin weight of $m^i$, 
$\bar m^i$ and $p^i$ is $1$, $-1$ and $0$, respectively. 

If $\delta_i$ denotes the (Cartesian components of the) covariant derivative 
operator of the induced Levi-Civita connection acting on the spinor fields 
on ${\cal S}$, then e.g. for $s=-\vert s\vert$ the $\iota^A$-spinor 
components of the covariant directional derivatives of the spinor fields, 
defined by ${\edth}_s\phi:=m^i\delta_i((\sqrt{2}P)^{-\vert s\vert}\phi\pi_{A_1}
\cdots\pi_{A_{2\vert s\vert}})\iota^{A_1}\cdots\iota^{A_{2\vert s\vert}}$ and ${\edth}'
_s\phi:=\bar m^i\delta_i((\sqrt{2}P)^{-\vert s\vert}\phi\pi_{A_1}\cdots\pi
_{A_{2\vert s\vert}})\iota^{A_1}\cdots\iota^{A_{2\vert s\vert}}$, give just the edth 
and edth-prime operators of Newman and Penrose \cite{NP} acting on the 
appropriate line bundles (see also \cite{PR,HT,EaTod,Goetal}). ${\edth}_s$ 
and ${\edth}'_s$ acting on cross sections of ${\cal O}(-2s)$ for $s=\vert s
\vert$ are defined analogously. ${\edth}_s$ increases, and ${\edth}'_s$ 
decreases the spin weight by one. In the complex stereographic coordinates 
on $U_n$, the explicit form of these operators, acting on a function $\phi$ 
of spin weight $s$, is 
\begin{equation}
{\edth}_s\phi=\frac{1}{\sqrt{2}P}\Bigl(\bigl(1+\zeta\bar\zeta\bigr)
\frac{\partial\phi}{\partial\bar\zeta}+s\zeta\phi\Bigr), \hskip 20pt
{\edth}'_s\phi=\frac{1}{\sqrt{2}P}\Bigl(\bigl(1+\zeta\bar\zeta\bigr)
\frac{\partial\phi}{\partial\zeta}-s\bar\zeta\phi\Bigr); \label{eq:A.1.7}
\end{equation}
while, in the more familiar polar coordinates $(\theta,\varphi)$, these 
operators take the form 
\begin{equation}
{\edth}_s\phi=-\frac{1}{\sqrt{2}P}e^{{\rm i}\varphi}\Bigl(\frac{\partial\phi}
{\partial\theta}-\frac{\rm i}{\sin\theta}\frac{\partial\phi}{\partial\varphi}
-s\cot\frac{\theta}{2}\phi\Bigr), \hskip 15pt
{\edth}'_s\phi=-\frac{1}{\sqrt{2}P}e^{-{\rm i}\varphi}\Bigl(\frac{\partial\phi}
{\partial\theta}+\frac{\rm i}{\sin\theta}\frac{\partial\phi}{\partial\varphi}
+s\cot\frac{\theta}{2}\phi\Bigr). \label{eq:A.1.8}
\end{equation}
If no confusion arises, simply we write ${\edth}$ and ${\edth}'$ instead of 
${\edth}_s$ and ${\edth}'_s$. These operators link the spinors $o^A$ and 
$\iota^A$: ${\edth}o^A=0$, ${\edth}'o^A=\iota^A/(\sqrt{2}P)$, ${\edth}\iota^A
=-o^A/(\sqrt{2}P)$ and ${\edth}'\iota^A=0$; which imply ${\edth}p^i=m^i$, 
${\edth}m^i=0$ and ${\edth}\bar m^i=-p^i/P^2$. 


\subsection{The unitary, irreducible representations of $E(3)$}
\label{sub-A.2}

We \emph{a priori} assume that the spectrum of the operator ${\bf p}^i$ is 
$\mathbb{R}^3$, the classical momentum space, and we endow it with the 
3-metric $\delta_{ij}$. Thus $(\mathbb{R}^3,\delta_{ij})$ is a \emph{flat} 
Riemannian 3-manifold with the globally defined Cartesian coordinates $p^i$ 
and the physically distinguished origin $p^i=0$. The action of $SU(2)$ is 
given by $p^i\mapsto\Lambda^i{}_j(A)p^j$, were $\Lambda^i{}_j(A):=-\sigma^i
_{AA'}A^A{}_B\bar A^{A'}{}_{B'}\sigma^{BB'}_j$, in which $A^A{}_B\in SU(2)$ and 
the over-bar denotes complex conjugation. (The $(-)$ sign in the expression 
of $\Lambda^i{}_j(A)$ is due to our convention that, in the present paper, 
we lower and raise the small Latin indices by the \emph{positive definite} 
$\delta_{ij}$ and its inverse.) The surfaces of transitivity of $SU(2)$ are 
${\cal S}:=\{p^i\in{\cal M}\vert P^2:=\delta_{ij}p^ip^j={\rm const}\}$ (the 
`kinetic energy-shell', being analogous to the mass-shell in the 
representation theory of the Poincar\'e group), which are 2-spheres for 
$P>0$ and the single point $p^i=0$ for $P=0$. 

The irreducible representations of $E(3)$ are labeled by the value of the 
two Casimir operators, $P$ and $w$, and hence there are two disjoint cases: 
when $P>0$ and when $P=0$. Thus, first let us suppose that $P>0$, and fix a 
point $\mathring{p}^i\in{\cal S}$. According to the method of induced 
representations, first we should find the representations of the stabilizer 
subgroup for $\mathring{p}^i$ in $E(3)$. This is $U(1)\subset SU(2)$, and, 
by Schur's lemma, all of its irreducible representations are one-dimensional, 
and these are labeled by $s=0,\pm\frac{1}{2},\pm1,...$. If $\mathring{\pi}
{}^A$ is the spinor constituent of $\mathring{p}^i$, then this 
one-dimensional representation space is chosen to be spanned by the spinor 
of the form $\mathring{\pi}{}^{A_1}\cdots\mathring{\pi}{}^{A_{2\vert s\vert}}$ if 
$s=-\vert s\vert\leq0$, and $\bar{\mathring{\pi}}{}^{A'_1}\cdots\bar
{\mathring{\pi}}{}^{A'_{2s}}$ if $s=\vert s\vert>0$. Then the next step is the 
generation of the representation space for the whole group $E(3)$ from this 
one dimensional space by the elements of $SU(2)$ that do not leave 
$\mathring{p}^i$ fixed, and by the translations in $E(3)$. Clearly, this is 
just the construction of the bundle of totally symmetric unprimed N-type 
spinors $\phi^{A_1...A_{2\vert s\vert}}$ on ${\cal S}$ if $s=-\vert s\vert$, and of 
the totally symmetric primed N-type spinors $\chi^{A'_1...A'_{2s}}$ on ${\cal S}$ 
if $s=\vert s\vert$. The resulting bundles are isomorphic to the line bundles 
${\cal O}(-2s)$ with the corresponding $s$. 

Next we determine the explicit form of the representation of $E(3)$ by 
operators ${\bf U}(\xi^i,A^A{}_B)$ acting on the spinor fields. The 
action of $SU(2)$ e.g. on the spinor field $\phi^{A_1...A_{2\vert s\vert}}$ is 
defined by $({\bf U}(0,A^A{}_B)\phi)^{A_1...A_{2\vert s\vert}}(p^i):=A^{A_1}{}_{B_1}
\cdots A^{A_{2\vert s\vert}}{}_{B_{2\vert s\vert}}\phi^{B_1...B_{2\vert s\vert}}(\Lambda(A^{-1})
^i{}_jp^j)$; while the action of the translation with $\xi^i$ is $({\bf U}
(\xi^i,\delta^A_B)\phi)^{A_1...A_{2\vert s\vert}}(p^i):=\exp({\rm i}p_i\xi^i/\hbar)
\phi^{A_1...A_{2\vert s\vert}}(p^i)$, the multiplication by the phase factor $\exp
({\rm i}p_i\xi^i/\hbar)$. These transformations provide a representation of 
$E(3)$, and these are unitary with respect to the scalar product 
(\ref{eq:A.1.5}). Moreover, since the functions $\exp({\rm i}p_i\xi^i/\hbar)$ 
on ${\cal S}$ span a dense subspace in the space of the square integrable 
functions on ${\cal S}$, this representation is irreducible, too. The $L_2$ 
space of the spinor fields with given $s$, i.e. the carrier space of the 
unitary, irreducible representation of $E(3)$ with given $P$ and $w$, will 
be denoted by ${\cal H}_s$. As we will see below, $w$ is linked to $s$, 
because we will find that $w=\hbar Ps$. 

In this representation, still for $s=-\vert s\vert\leq0$, the operators 
${\bf p}_i$ and ${\bf J}_{ij}:=\varepsilon_{ijk}{\bf J}^k$ are defined to be 
the densely defined self-adjoint generators of these transformations: If 
$\xi^i=uT^i$, then ${\bf p}_i$ is defined by $({\rm i}/\hbar)T^i{\bf p}_i
\phi^{A_1...}:=\frac{\rm d}{{\rm d}u}(({\bf U}(uT^i,\delta^A_B)\phi)
^{A_1...})\vert_{u=0}$. Here the limit in the definition of the derivative is 
meant in the strong topology of ${\cal H}_s$. Evaluating this, we find that 
${\bf p}_i$ is the multiplication operator. We already determined even the 
specific form of the self-adjoint generators ${\bf J}_{ij}$ of the action of 
$SU(2)$ in \cite{Sz1}, and we do not repeat that derivation here. We obtain 
\begin{eqnarray}
{\bf p}_i\phi^{A_1...A_{2\vert s\vert}}\!\!\!\!&=\!\!\!\!&p_i\phi^{A_1...A_{2\vert s\vert}},
  \label{eq:A.2.1a} \\
{\bf J}_{ij}\phi^{A_1...A_{2\vert s\vert}}\!\!\!\!&=\!\!\!\!&{\rm i}\hbar\Bigl(p_j
  \frac{\partial}{\partial p^i}-p_i\frac{\partial}{\partial p^j}\Bigr)\phi
  ^{A_1...A_{2\vert s\vert}}+\sqrt{2}\hbar\,s\,\varepsilon_{ij}{}^k\sigma_k
  ^{(A_1}{}_{(B_1}\delta^{A_2}_{B_2}\cdots\delta^{A_{2\vert s\vert})}
  _{B_{2\vert s\vert})}\phi^{B_1...B_{2\vert s\vert}}. \hskip 10pt \label{eq:A.2.1b}
\end{eqnarray}
Here $\sigma^A_i{}_B$ are the standard $SU(2)$ Pauli matrices (including the 
factor $1/\sqrt{2}$), which are just the unitary spinor form of the three 
non-trivial $SL(2,\mathbb{C})$ Pauli matrices, and, with the present sign 
convention, these are given by $\sigma^A_i{}_B=\delta_{ij}\varepsilon^{AC}
\sigma^j{}_{CB'}\sqrt{2}\sigma^{0B'}{}_B$. Since ${\bf p}_i$ is a 
multiplication operator and ${\cal S}$ is compact, it is well defined and 
is bounded on the whole of ${\cal H}_s$. On the other hand, ${\bf J}_{ij}$ is 
well defined only on the \emph{dense subspace} of the smooth spinor fields 
in ${\cal H}_s$. Then it is a straightforward calculation to check that these 
operators do, indeed, satisfy the defining commutation relations of $e(3)$ 
on the appropriate dense subspaces; and hence that (\ref{eq:A.2.1a}) and 
(\ref{eq:A.2.1b}) provide a unitary irreducible representation of the Lie 
algebra $e(3)$. 

By (\ref{eq:A.2.1a}) the Casimir operator ${\bf P}^2$ on ${\cal H}_s$ is 
simply the multiplication by $P^2$. In \cite{Sz1} we already calculated the 
action of $p^i{\bf J}_i$ on the spinor fields $\phi^{A_1...A_{2\vert s\vert}}$. 
However, by (\ref{eq:A.2.1a}) in the irreducible representation labeled 
by $P$ this is just the Casimir operator ${\bf W}$. Hence, 
\begin{equation}
{\bf W}\phi^{A_1...A_{2\vert s\vert}}={\bf p}^i{\bf J}_i\phi^{A_1...A_{2\vert s\vert}}=
\hbar Ps\phi^{A_1...A_{2\vert s\vert}}. \label{eq:A.2.2}
\end{equation}
Therefore, ${\bf W}$ is, in fact, proportional to the identity operator, 
and the value of ${\bf W}$ on ${\cal H}_s$ is $w=\hbar Ps$. Thus the 
representation can be labelled equally well by $P$ and $w$ or by $P$ and 
$s$. Repeating the previous analysis if $s=\vert s\vert>0$, we find the 
same expression for both ${\bf J}_{ij}$ and ${\bf W}$. Therefore, we 
obtained the unitary, irreducible representations of the Euclidean group 
$E(3)$ just in the geometrical form that we have for the Poincar\'e group, 
and summarized in \cite{StWi}. 

In \cite{Sz1}, we also calculated the contractions $m^i{\bf J}_i$ and $\bar 
m^i{\bf J}_i$. Using these, equation (\ref{eq:A.2.2}) and the fact that on 
$U_n$ the spinor field is $\phi^{A_1...A_{2\vert s\vert}}=\phi\pi^{A_1}\cdots\pi
^{A_{2\vert s\vert}}$, we obtained 
\begin{eqnarray*}
{\bf J}_i\phi^{A_1...A_{2\vert s\vert}}\!\!\!\!&=\!\!\!\!&\bigl(m_i\bar m^j
  +\bar m_im^j+\frac{1}{P^2}p_ip^j\bigr){\bf J}_j\phi^{A_1...A_{2\vert s\vert}} \\
\!\!\!\!&=\!\!\!\!&P\hbar\Bigl(m_i{\edth}'\phi-\bar m_i{\edth}\phi+s\frac{p_i}
  {P^2}\phi\Bigr)\pi^{A_1}\cdots\pi^{A_{2\vert s\vert}}.
\end{eqnarray*}
Therefore, defining the action of ${\bf J}_i$ on the spin weighted function 
$\phi$ with spin weight $s$ simply by ${\bf J}_i\phi:=(\sqrt{2}P)
^{-\vert s\vert}({\bf J}_i\phi_{A_1...A_{2\vert s\vert}})\iota^{A_1}\cdots\iota
^{A_{2\vert s\vert}}$, we obtain 
\begin{equation}
{\bf J}_i\phi=P\hbar\Bigl(m_i{\edth}'\phi-\bar m_i{\edth}\phi+s\frac{n_i}{P}
\phi\Bigr). \label{eq:A.2.3}
\end{equation}
The first two terms together on the right, denoted by ${\bf L}^i$, are the 
orbital, while the third is the spin part of the angular momentum. The former 
can also be written as $P^2\varepsilon_{ijk}{\bf L}^k={\bf C}_i{\bf p}_j-
{\bf C}_j{\bf p}_i$, from which the expression 
\begin{equation}
{\bf C}_i\phi={\rm i}\hbar\Bigl(P^2m_i\,{\edth}'\phi+P^2\bar m_i\,{\edth}
  \phi-p_i\phi\Bigr) \label{eq:A.2.4}
\end{equation}
for the centre-of-mass vector operator follows. 

By (\ref{eq:A.2.3}) it is easy to compute the only Casimir operator 
${\bf J}_i{\bf J}^i$ of the $su(2)$ subalgebra. It is given by 
\begin{equation}
{\bf J}_i{\bf J}^i\phi=\hbar^2\Bigl(-P^2\bigl({\edth}{\edth}'+{\edth}'{\edth}
\bigr)\phi+s^2\phi\Bigr); \label{eq:A.2.5}
\end{equation}
where ${\edth}{\edth}'+{\edth}'{\edth}$ is just the Laplace operator on 
${\cal S}$. Hence the spectrum of ${\bf J}_i{\bf J}^i$ is $j(j+1)\hbar^2$, 
$j=\vert s\vert,\vert s\vert+1,...$,  as we expected, and, for fixed $s$ 
and $j$, the $SU(2)$-irreducible subspaces in ${\cal H}_s$ are spanned by 
the spin weighted spherical harmonics ${}_sY_{j,m}$, $m=-j,-j+1,...,j$. 

If $P=0$, then ${\bf p}^i$ is represented by the identity operator. In this 
representation, the value $w$ of ${\bf W}$ is necessarily zero, but this 
representation is \emph{not} irreducible with respect to $SU(2)$. It is an 
infinite direct sum of unitary irreducible representations of $SU(2)$ 
labeled by the value $j$ of the Casimir operator of $su(2)$.



\begin{thebibliography}{99.}%

\bibitem{Kl} J. R. Klauder, B.-S. Skagerstam, Coherent states, Applications in 
  physics and mathematical physics, in {\it Coherent States}, 
  World Scientific, Singapore 1985

\bibitem{ZFG} W.-M. Zhang, D. H. Feng, R. Gilmore, Coherent states: Theory 
  and some applications, Rev. Mod. Phys. 62 867 (1990)

\bibitem{Rad} J. M. Radcliffe, Some properties of coherent spin states, J. 
  Phys. A: Gen. Phys. 4, 313 (1971) 

\bibitem{ArCoGiTh} F. T. Arecchi, E. Courtens, R. Gilmore, H. Thomas, Atomic 
  coherent states in quantum optics, Phys. Rev. A 6, 2211 (1972) 

\bibitem{Lieb} E. H. Lieb, The classical limit of quantum spin systems, 
  Commun. Math. Phys. 31, 327 (1973) 

\bibitem{Aretal} C. Aragone, G. Guerri, S. Salamo, J. L. Tani, Intelligent 
  spin states, J. Phys. A: Math., Nucl. Gen., 7, L149 (1974) 

\bibitem{BaGi} A. O. Barut, L. Girardello, New `coherent' states associated 
  with non-compact groups, Commun. Math. Phys. 21, 41 (1971) 

\bibitem{WaSaPa} X. Wang, B. C. Sanders, S.-h. Pan, Entangled coherent states 
  for systems with $SU(2)$ and $SU(1,1)$ symmetries, J. Phys. A: Math. Gen. 
  33, 7451 (2000) 

\bibitem{Bi} S. de Bi\`evre, Coherent states over symplectic homogeneous 
  spaces, J. Math. Phys. 30, 1401 (1989)

\bibitem{IsKl} C. J. Isham, J. R. Klauder, Coherent states for 
  $n$-dimensional Euclidean groups $E(n)$ and their application, 
  J. Math. Phys. 32, 607 (1991) 

\bibitem{KoRePa} K. Kowalski, J. Rembieli\'nski, L. C. Papaloucas, Coherent 
  states for a quantum particle on a circle, J. Phys. A: Math. Gen. 29, 
  4149 (1996), quant-ph/9801029 

\bibitem{KoRe1} K. Kowalski, J. Rembieli\'nski, Quantum mechanics on a sphere 
  and coherent states, J. Phys. A: Math. Gen. 33, 6035 (2000),
  quant-ph/9912094

\bibitem{KoRe02} K. Kowalski, J. Rembieli\'nski, On the uncertainty relations 
  and squeezed states for the quantum mechanics on a circle, J. Phys. A: 
  Math. Gen. 35, 1405 (2002) 

\bibitem{HaMi} B. C. Hall, J. J. Mitchell, Coherent states on spheres, 
  J. Math. Phys. 43, 1211 (2002), Erratum: 46, (2005) 059901,
  quant-ph/0109086

\bibitem{GaGa} P. L. Garcia de Le\'on, J. P. Gazeau, Coherent state
     quantization and phase operator, Phys. Lett. A 361, 301 (2007)

\bibitem{FreGaNo} R. Fresneda, J. P. Gazeau, D. Noguera, Quantum localisation
     on the circle, J. Math. Phys. 59, 052105 (2018)

\bibitem{SiJa} R. A. Silva, T. Jacobson, Particle on the sphere: 
  Group-theoretic quantization in the presence of a magnetic monopole, J. 
  Phys. A: Math. Theor. 54, 235303 (2021), arXiv: 2011.04888 [quant-ph] 

\bibitem{GuMo-Ce} J. Guerrero, H. M. Moya-Cessa, Coherent states for equally 
     spaced, homogeneous waveguide arrays, arXiv:2112.01673 [quant-ph]

\bibitem{Per} A. M. Perelomov, Coherent states for arbitrary Lie group, 
  Commun. Math. Phys. 26, 222 (1972) 

\bibitem{Sz1} L. B. Szabados, An odd feature of the `most classical' states 
  of $SU(2)$ invariant quantum mechanical systems, J. Math. Phys. {\bf 64} 
  (2023), doi: 10.1063/5.0109611, arXiv: 2106.08695 

\bibitem{NeWign} T. D. Newton, E. P. Wigner, Localized states for elementary 
               systems, Rev. Mod. Phys. 21, 400 (1949) 

\bibitem{StWi} R. F. Streater, A. S. Wightman, {\it PCT, Spin and Statistics, 
  and All That}, W. A. Benjamin, INC, New York 1964 

\bibitem{PR} R. Penrose, W. Rindler, {\it Spinors and Spacetime}, vol 1, 
  Cambridge University Press, Cambridge 1984

\bibitem{HT} S. A. Hugget, K. P. Tod, {\it An Introduction to Twistor Theory}, 
  London Mathematical Society Student Texts 4, 2nd edition, Cambridge 
  University Press, Cambridge 1994 

\bibitem{EaTod} M. Eastwood, P. Tod, Edth -- a differential operator on the 
  sphere, Math. Proc. Camb. Phil. Soc. 92, 317 (1982) 

\bibitem{NP} E. T. Newman, R. Penrose, Note on the Bondi--Metzner--Sachs 
  group, J. Math. Phys. 7, 863 (1966) 

\bibitem{Goetal} J. N. Goldberg, A. J. Macfarlane, E. T. Newman, F. Rohrlich, 
  E. C. G. Sudarshan, Spin-s spherical harmonics and ${\edth}$, J. Math. 
  Phys. 8, 2155 (1967) 

\bibitem{FrSz} J. Frauendiener, L. B. Szabados, The kernel of the edth 
  operators on higher-genus spacelike 2-surfaces, Class. Quantum Grav. 18, 
  1003 (2001), gr-qc/0010089

\bibitem{AbSt} {\it Handbook of Mathematical Functions with Formulas, Graphs 
  and Mathematical Tables}, Eds. M. Abramowitz and I. A. Stegun, US National 
  Bureau of Standards, 1972 

\bibitem{We} E. W. Weisstein, {\it Modified Bessel Function of the First 
  Kind}, From MathWorld--A Wolfram Web Resource. 
 {\tt https://mathworld.wolfram.com/ModifiedBesselFunctionofthe FirstKind.html} 
 (cited on 2020, September)

\bibitem{Tung} W.-K. Tung, {\it Group Theory in Physics}, World Scientific, 
  Singapore 1985 

\bibitem{Sc} W. Schlag, {\it A Concise Course in Complex Analysis and 
  Riemann Surfaces}, URL: 
  {\tt https://gauss.math.yale.edu/$\sim$ws442/complex.pdf }

\bibitem{Aretal76} C. Aragone, E. Chalbaud, S. Salam\'o, On intelligent 
  spin states, J. Math. Phys. 17, 1963 (1976) 

\bibitem{ArKe} E. Arthurs, J. L. Kelly, On the simultaneous measurements of a 
  pair of conjugate observables, Bell Syst. Tech. J. 44, 725 (1965) 

\bibitem{SHe} C. Y. She, H. Heffner, Simultaneous measurement of noncommuting 
  observables, Phys. Rev. 152, 1103 (1966) 

\end{thebibliography}
\end{document}